\definecolor{sectioncolor}{RGB}{0, 0, 0}    
\definecolor{textcolor}{RGB}{0, 0, 0}        
\definecolor{shadecolor}{gray}{0.90}
\definecolor{pdfurlcolor}{rgb}{0,0,0.6}
\definecolor{pdffilecolor}{rgb}{0.7,0,0}
\definecolor{pdflinkcolor}{rgb}{0,0,0.6}
\definecolor{pdfcitecolor}{rgb}{0,0,0.6}
\colorlet{stringcolor}{green!40!black!100}
\colorlet{commencolor}{blue!0!black!100}
\let\orgautoref\autoref
\providecommand{\Autoref}
        {\def\equationautorefname{Equation}%
         \def\figureautorefname{Figure}%
         \def\subfigureautorefname{Figure}%
         \def\Itemautorefname{Item}%
         \def\tableautorefname{Table}%
         \def\sectionautorefname{Section}%
         \def\subsectionautorefname{Section}%
         \def\subsubsectionautorefname{Section}%
         \def\chapterautorefname{Section}%
         \def\partautorefname{Part}%
	 \def\algocfautorefname{Algorithm}
         \orgautoref}
\def\subsectionautorefname{Section}%
\def\subsubsectionautorefname{Section}%
\def\algocfautorefname{Algorithm}
\theoremstyle{plain}
\newtheorem{theorem}{Theorem}  
\newaliascnt{lemma}{theorem}  
\newtheorem{lemma}[lemma]{Lemma}  
\newtheorem*{theorem*}{Theorem}
\theoremstyle{definition}
\theoremstyle{remark}
\theoremstyle{definition}
\newcommand{\N}{\mathcal{N}}
\DeclareMathOperator{\e}{e}
\newcommand{\OPT}{\textsf{OPT}}
\newcommand{\PLG}{\textsc{PLG}}
\newcommand{\MVC}{\textsc{Min-VC}}
\newcommand{\MIS}{\textsc{Max-IS}}
\newcommand{\MDS}{\textsc{Min-DS}}
\newcommand{\UGC}{\textsc{UGC}}
\newcommand{\ugc}{\textrm{Unique Games Conjecture}}
\newcommand{\mvc}{\textsc{Minimum Vertex Cover}}
\newcommand{\mis}{\textsc{Maximum Independent Set}}
\newcommand{\mds}{\textsc{Minimum Dominating Set}}
\newcommand{\LP}{\textsf{LP}}
\newcommand{\NP}{\textsc{NP}}
\renewcommand{\P}{\textsc{P}}
\newcommand{\APX}{\textsc{APX}}
\newcommand{\E}{\mathbb{E}}
\newcommand{\weight}{\mathrm{w}}
\renewcommand{\deg}[1]{\ensuremath{d(#1)}}
\newcommand{\gab}{\ensuremath{G}}
\newcommand{\Gab}{\ensuremath{\mathcal{G}_{(\alpha,\beta)}}}
\newcommand{\Mab}{\ensuremath{M(\alpha,\beta)}}
\newcommand{\dmax}{\ensuremath{\Delta}}
\newcommand{\Vgood}{\ensuremath{V^{*}}}
\newcommand{\mainratio}{\ensuremath{2-\frac{\zeta(\beta)-1-\frac{1}{2^{\beta}}}{2^{\beta}\zeta(\beta-1)\zeta(\beta)}}}
\newcommand{\newratio}{\ensuremath{2-\frac{\left(\zeta(\beta)-1-\frac{1}{2^{\beta}}\right) \cdot \zeta(\beta-1)}{\zeta(\beta-1) \cdot \zeta(\beta)}
 \left[1-{\left(\frac{\zeta(\beta-1)-\left(1+\frac{1}{2^{\beta-1}}\right)}{\zeta(\beta-1)}\right)}^{3}\right]}}
\newcommand{\etau}{\ensuremath{\frac{\sum_{i = 1}^{\dmax}\frac{e^{\alpha}}{i^{\beta-1}}-e^{\frac{\alpha}{\beta}}+1}{\sum_{i = 1}^{\dmax}\frac{e^{\alpha}}{i^{\beta-1}}}\left[1-\left(\frac{\sum_{i = 1}^{\dmax}\frac{e^{\alpha}}{i^{\beta-1}}-\deg{U}-3+1}{\sum_{i = 1}^{\dmax}\frac{e^{\alpha}}{i^{\beta-1}}-3+1}\right)^{3}\right]}}
\def\clap#1{\hbox to 0pt{\hss#1\hss}}
\def\mathclap{\mathpalette\mathclapinternal}
\def\mathclapinternal#1#2{%
\clap{$\mathsurround=0pt#1{#2}$}%
}
\newcommand{\unproc}{\upshape\textrm{\texttt{unprocessed}}}
\newcommand{\proc}{\upshape\textrm{\texttt{processed}}}
\newcommand{\comp}{\texttt{compute}}
\newcommand{\aset}{\texttt{set}}
\newcommand{\astep}[1]{{\scriptsize\textrm{\textbf{(#1)}}}}
\newcommand{\floor}[1]{\left\lfloor {#1} \right\rfloor}
\newcommand{\ceil}[1]{\left\lceil {#1} \right\rceil}
\renewcommand{\leq}{\leqslant}
\renewcommand{\geq}{\geqslant}
\newcommand{\restrict}{\!\restriction\!}
\tikzset{vertex/.style={draw,shading=ball,ball color=black,circle, inner sep=0pt, minimum size=4pt}}
\tikzset{vertex2/.style={draw,shading=ball,ball color=black!25,circle,  inner sep=0pt, minimum size=4pt}}
\tikzset{every label/.style={label distance=-3pt}, node distance = 12pt}
\newcommand*{\triplet}[4]{\tikz[baseline=0pt] {
  \node[#1,label=above:\tiny$#2$] (u) {};
  \node[vertex,label=above:\tiny$\nicefrac{1}{2}$,right= of u] (v) {};
  \node[#3,label=above:\tiny$#4$,right= of v] (w) {};
  \path[draw] (u) -- (v) -- (w);
  \node[right=5pt of w, yshift=2pt] (mapsto) {$\mapsto$};
  \node[vertex2,label=above:\tiny$1$,right=5pt of mapsto, yshift=-2pt] (u) {};
  \node[vertex2,label=above:\tiny$0$,right= of u] (v) {};
  \node[vertex2,label=above:\tiny$1$,right= of v] (w) {};
  \path[draw] (u) -- (v) -- (w);
 }}
\newcommand*{\zweisprung}{\tikz[baseline=0pt] {
  \node[vertex,label=above:\tiny$\nicefrac{1}{2}$] (u) {};
  \node[vertex,label=above:\tiny$\nicefrac{1}{2}$,right= of u] (v) {};
  \node[vertex,label=above:\tiny$\nicefrac{1}{2}$,right= of v] (w) {};
  \path[draw] (u) -- (v) -- (w);
  \node[right=5pt of w, yshift=2pt] (mapsto) {$\mapsto$};
  \node[vertex,label=above:\tiny$\nicefrac{1}{2}$,right=5pt of mapsto, yshift=-2pt] (u) {};
  \node[vertex2,label=above:\tiny$1$,right= of u] (v) {};
  \node[vertex2,label=above:\tiny$0$,right= of v] (w) {};
  \path[draw] (u) -- (v) -- (w);
  \node[right=5pt of w, yshift=2pt] (mapsto) {$\mapsto$};
  \node[vertex2,label=above:\tiny$1$,right=5pt of mapsto, yshift=-2pt] (u) {};
  \node[vertex2,label=above:\tiny$1$,right= of u] (v) {};
  \node[vertex2,label=above:\tiny$0$,right= of v] (w) {};
  \path[draw] (u) -- (v) -- (w);
 }}
\newcommand*{\doublet}{\tikz[baseline=0pt] {
  \node[vertex,label=above:\tiny$\nicefrac{1}{2}$] (u) {};
  \node[vertex,label=above:\tiny$1$,right= of u] (v) {};
  \path[draw] (u) -- (v);
  \node[right=5pt of v, yshift=2pt] (mapsto) {$\mapsto$};
  \node[vertex2,label=above:\tiny$1$,right=5pt of mapsto, yshift=-2pt] (u) {};
  \node[vertex2,label=above:\tiny$1$,right= of u] (v) {};
  \path[draw] (u) -- (v);
 }}
\newcommand*{\quartet}{\tikz[baseline=0pt] {
  \node[vertex,label=above:\tiny$\geq\negthickspace\nicefrac{1}{2}$] (u) {};
  \node[vertex,label=above:\tiny$\nicefrac{1}{2}$,right= of u] (v1) {};
  \node[vertex,label=above:\tiny$\geq\negthickspace\nicefrac{1}{2}$,right= of v1] (v2) {};
  \node[vertex,label=above:\tiny$\geq\negthickspace 0$,right= of v2] (w) {};
  \path[draw] (u) -- (v1) -- (v2) -- (w);
  \node[right= of w, yshift=2pt] (mapsto) {$\mapsto$};
  \node[vertex2,label=above:\tiny$1$,right= of mapsto, yshift=-2pt] (u) {};
  \node[vertex2,label=above:\tiny$1$,right= of u] (v1) {};
  \node[vertex2,label=above:\tiny$0$,right= of v1] (v2) {};
  \node[vertex2,label=above:\tiny$1$,right= of v2] (w) {};
  \path[draw] (u) -- (v1) -- (v2) -- (w);
 }}
\newcommand*{\striplet}{\tikz[baseline=-5pt] {
  \node[vertex,label=below:\tiny$u$] (u) {};
  \node[vertex,label=below:\tiny$v$,right= of u] (v) {};
  \node[vertex,label=below:\tiny$w$,right= of v] (w) {};
  \path[draw] (u) -- (v) -- (w);
 }}
\newcommand*{\squartet}{\tikz[baseline=-5pt] {
  \node[vertex,label=below:\tiny$u$] (u) {};
  \node[vertex,label=below:\tiny$v_1$,right= of u] (v1) {};
  \node[vertex,label=below:\tiny$v_2$,right= of v1] (v2) {};
  \node[vertex,label=below:\tiny$w$,right= of v2] (w) {};
  \path[draw] (u) -- (v1) -- (v2) -- (w);
 }}
\begin{document}

\title{\bf Approximability of the Vertex Cover Problem in Power Law Graphs}
\author{Mikael Gast\thanks{Dept. of Computer Science, University of Bonn.
    e-mail:{ \tt gast@cs.uni-bonn.de}} \and
	Mathias Hauptmann\thanks{Dept. of Computer Science, University of Bonn.
    e-mail:{ \tt hauptman@cs.uni-bonn.de}}}
\date{}
\maketitle

\begin{abstract}
In this paper we construct an approximation algorithm for the \textsc{Minimum Vertex Cover} Problem (\textsc{Min-VC}) with an expected approximation ratio of $2-\frac{\zeta(\beta)-1-\frac{1}{2^{\beta}}}{2^{\beta}\zeta(\beta-1)\zeta(\beta)}$ for random \emph{Power Law Graphs} (PLG) in the $(\alpha,\beta)$-model of Aiello et. al.. We obtain this result by combining the Nemhauser and Trotter approach for \textsc{Min-VC} with a new deterministic rounding procedure which achieves an approximation ratio of $\frac{3}{2}$ on a subset of low degree vertices for which the expected contribution to the cost of the associated linear program is sufficiently large.
\end{abstract}
\section{Introduction}

In recent years topological analyses have been applied to a variety of real world graphs such as the World-Wide Web, the Internet, Collaboration and Social Networks, Protein Interaction Networks and other large-scale graphs of biological systems.
Typical statistical parameters such as the \emph{diameter}, \emph{robustness}, \emph{clustering coefficient} and \emph{degree distribution} have been measured and compared to the expected values of these parameters in \emph{uniform} random graph models such as the classical $G(n,p)$-Model due to Erd\H{o}s and Rényi \cite{Erdos1960}. It turned out that the real world graphs are significantly different from the random models with respect to these statistical and topological properties. In subsequent studies the aim was to describe the properties of real world networks mathematically and to propose new models in order to meet these conditions.

As of 1999 Kumar et. al. \cite{Broder2000,Kumar2000}, Kleinberg et. al. \cite{Kleinberg1999a,Kleinberg2001} and Faloutsos, Faloutsos and Faloutsos \cite{Faloutsos1999,Siganos2003} measured the degree sequence of the World-Wide Web and independently observed that it is well approximated by a power law distribution, i.e. the number of nodes $y_i$ of a given degree $i$ is proportional to $i^{-\beta}$ where $\beta > 0$.
This was later verified for a large number of existing real-world networks such as protein-protein interactions, gene regulatory networks, peer-to-peer networks, mobile call networks and social networks \cite{Jovanovic2001,Guelzim2002,Seshadri2008,Eubank2004}.

In order to analyze these graphs, some research has been directed towards finding suitable models for describing structural properties quantitatively and qualitatively. A number of \emph{Power Law Graph} (\PLG) models have been proposed, such as the Barab\'asi-Albert model of \emph{Preferential Attachment} \cite{Barabasi1999}, the \emph{Buckley-Osthus Model} \cite{Buckley2004}, the \emph{Cooper-Frieze Model} \cite{Cooper2003} and the \emph{Copying Model} due to Kumar et. al. \cite{Kumar2000}. All these models describe a \emph{random growth process} starting from a small seed graph and yielding -- besides other features -- a power law degree sequences in the limit.

A different approach is to take a power law degree sequence as input and to generate a graph instance with this distribution in a random fashion. Among the most widely known models of this kind is the \emph{ACL-Model} due to Aiello, Chung and Lu \cite{Aiello2001}.
Here, the number $y_i$ of vertices with degree $i$ is roughly given by $y_i \approx e^{\alpha} / i^{\beta}$, where $e^{\alpha}$ is a normalization constant which determines the size of the graph.
While this model is potentially less accurate than the detailed description of a growth process, it has the advantage of being robust and general, i.e., structural properties that are true in this model will be true for the majority of graphs with the given degree sequence.

All of the above models are well motivated and there exists a large body of literature on mathematical foundations and applications \cite{Barabasi1999,Aiello2000,Bollobas2002,Eubank2004,Mihail2006}. In this paper, we focus on the ACL-Model for random \PLG\ which we will refer to as the \emph{$(\alpha,\beta)$-Model}.

Apart from having certain structural properties, such as high clustering coefficient, small-world characteristics and self similarity, there exists practical evidence that combinatorial optimization in \PLG\ is easier than in general graphs \cite{Park2001,Gkantsidis2003,Eubank2004,Koyuturk2006}.
Contrasting this Ferrante et. al. \cite{Ferrante2008} and Shen et. al. \cite{Shen2010} studied the approximation hardness of certain optimization problems in \emph{combinatorial Power Law Graphs} and showed \NP-hardness and \APX-hardness of classical problems such as \mvc\ (\MVC), \mis\ (\MIS) and \mds\ (\MDS).
In this paper we study the approximability of the \mvc\ problem in the random Power Law Graph model of Aiello et. al. \cite{Aiello2001}.

The \mvc\ is one of the most well-studied problems in combinatorial optimization. A \emph{vertex cover} of a graph $G=(V,E)$ is a set of vertices $C\subseteq V$ such that each edge  $e=\{u,v\}$ of $G$ has at least one endpoint in $C$. The \mvc\ problem (\MVC) is the the problem of finding a cover of minimum cardinality in a graph.
The problem is known to be \NP-complete due to Karp's original proof \cite{Karp1972} and \APX-complete \cite{Papadimitriou1991}. Moreover, it cannot be approximated within a factor of $1.3606$ \cite{Dinur2005}, unless $\P = \NP$, and is inapproximable within $2-\epsilon$ for any $\epsilon > 0$ as long as the \ugc\ (\UGC) holds true \cite{Khot2008}.
Here, we show that the \MVC\ problem can be approximated with an expected approximation ratio $<2$ in random Power Law Graphs:

\begin{theorem}\label{thm:main}
 There exists a polynomial time algorithm which approximates the \mvc\ problem (\MVC) in random Power Law Graphs in the $(\alpha,\beta)$-Model for $\beta > 2$ (where graphs are given instance by instance) with an expected approximation ratio of \[\rho=\mainratio.\]
\end{theorem}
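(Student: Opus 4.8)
The plan is to combine the Nemhauser--Trotter (NT) kernelisation of \MVC\ with an improved rounding on low-degree vertices, and then to average the resulting savings over the random degree sequence of the $(\alpha,\beta)$-model. First I would recall the half-integrality of the \MVC\ LP relaxation: solving it yields an optimum $x^{*}\in\{0,\tfrac12,1\}^{V}$ inducing a partition $V=V_{0}\cup V_{1/2}\cup V_{1}$, and by the NT theorem some optimum cover contains all of $V_{1}$ and none of $V_{0}$, so it suffices to cover the induced subgraph on $V_{1/2}$. The LP value is then $\mathrm{LP}=|V_{1}|+\tfrac12|V_{1/2}|\le\OPT$.

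On the kernel $V_{1/2}$ the standard factor-$2$ step takes all of $V_{1/2}$; I would instead single out a set $\Vgood\subseteq V_{1/2}$ of low-degree vertices and round the short configurations around a degree-two vertex deterministically (dropping the degree-two vertex and taking its higher-degree neighbours). Enumerating the constantly many local configurations, I would check that each yields a feasible cover and pays at most $\tfrac32$ times the LP mass $\mathrm{LP}_{\Vgood}=\tfrac12|\Vgood|$ it carries. Summing over gadgets and the remaining naive part gives $\mathrm{ALG}\le 2\,\mathrm{LP}-\tfrac12\,\mathrm{LP}_{\Vgood}$, whence, using $\mathrm{LP}\le\OPT$,
\[
  \frac{\mathrm{ALG}}{\OPT}\ \le\ 2-\frac{\mathrm{LP}_{\Vgood}}{2\,\OPT}.
\]

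It remains to take expectations over \Gab. I would lower bound $\E[\mathrm{LP}_{\Vgood}]$ through the model's concentrated degree counts $y_{i}\approx e^{\alpha}/i^{\beta}$: the savable vertices are degree-two vertices lying in the right local configuration, so their expected mass is driven by the degree-two count $e^{\alpha}/2^{\beta}$ together with the degree-$\ge 3$ count $e^{\alpha}\bigl(\zeta(\beta)-1-2^{-\beta}\bigr)=e^{\alpha}\sum_{i\ge 3}i^{-\beta}$, while the normalisations $n\approx e^{\alpha}\zeta(\beta)$ and the degree sum $2m\approx e^{\alpha}\zeta(\beta-1)$ enter through the (size-biased) neighbour distribution and the bound on $\OPT$. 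Substituting these into the pointwise estimate and simplifying yields exactly $\rho=\mainratio$; the convergence of $\zeta(\beta-1)$ is precisely what forces the hypothesis $\beta>2$.

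The main obstacle is controlling $\E[\mathrm{LP}_{\Vgood}]$ rigorously: a priori the NT decomposition could absorb the low-degree vertices into $V_{0}\cup V_{1}$, leaving $\Vgood$ empty, so one must prove that the expected LP contribution of the savable configurations is bounded below by the claimed constant fraction---this is the exact meaning of the contribution being \emph{sufficiently large}. A secondary difficulty is that the quantity reported is an \emph{expected} ratio, and $\E[\mathrm{ALG}]/\E[\OPT]\ne\E[\mathrm{ALG}/\OPT]$ in general; I would resolve this by establishing concentration of the counts $y_{i}$ (hence of $n$, $m$, $\mathrm{LP}_{\Vgood}$ and of the lower bound on $\OPT$) about their model values, so that the ratio of expectations controls the expected ratio up to lower-order terms.
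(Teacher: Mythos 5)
Your skeleton coincides with the paper's: Nemhauser--Trotter half-integral LP, a local rounding that pays $\tfrac{3}{2}$ times the LP mass on the low-degree part and $2$ elsewhere, and an expected count of savable configurations in the $(\alpha,\beta)$-model; you even attribute the factors of $\rho$ to the correct model quantities. But the step you yourself flag as ``the main obstacle'' is precisely the paper's key lemma, and you leave it unresolved, so the proposal has a genuine gap at its crux. The resolution (\autoref{lem:lemma5}) is deterministic, not probabilistic, and does not depend on where the NT decomposition places the low-degree vertices: let $Y=\{u \in V : \deg{u}\geq 3 \wedge \exists v\in\N(u),\ \deg{v}\in\{1,2\}\}$, and pick for each $u\in Y$ one edge $f(u)$ joining it to such a low-degree neighbor. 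The chosen edges form vertex-disjoint paths of length at most two inside $\Vgood$; LP feasibility forces mass at least $1$ on each such path, and each path contains at most two vertices of $Y$, whence $x(\Vgood)\geq \tfrac{1}{2}\lvert Y\rvert$ for \emph{every} feasible $x$, no matter how small the kernel $V_{1/2}$ is. Only after this reduction does probability enter, and then only through $\E[\lvert Y\rvert]=\sum_{u:\deg{u}\geq 3}\eta(u)$ together with the matching estimate $\eta(u)\geq \frac{2\cdot\#\{\text{degree-2 vertices}\}}{\left(\sum_{v}\deg{v}\right)-1}$ of \autoref{lem:etaLemma}, which is the size-biased computation you sketch.

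Your handling of the expectation is also off. In the ACL model the counts $y_i$ (hence $n$ and $m$) are \emph{deterministic} --- only the matching is random --- so ``establishing concentration of the counts $y_i$'' is vacuous, and no concentration of any kind is needed. The paper sidesteps the $\E[\mathrm{ALG}/\OPT]$ versus $\E[\mathrm{ALG}]/\E[\OPT]$ issue by never placing a random quantity in the denominator: it bounds $\frac{y(V)}{\OPT}\leq \frac{y(V)}{x(V)}\leq 2-\tfrac{1}{2}\cdot\frac{x(\Vgood)}{x(V)}$ and then applies the instance-wise bound $x(V)\leq \tfrac{1}{2}\zeta(\beta)\e^{\alpha}$ of \autoref{lem:UB_V} (the all-$\tfrac{1}{2}$ assignment is feasible), so that linearity of expectation applied to $x(\Vgood)$ alone finishes the proof. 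Relatedly, your displayed inequality keeps $\OPT$ in the denominator; the only deterministic bound available there is $\OPT\leq n$, which costs a factor $2$ in the savings term and would yield $2^{\beta+1}$ in place of $2^{\beta}$ in $\rho$ --- divide by the LP value instead, which your own first inequality $\mathrm{ALG}\leq 2\,\mathrm{LP}-\tfrac{1}{2}\,\mathrm{LP}_{\Vgood}$ permits. Finally, replacing the partial sums $\sum_{i=1}^{\dmax}i^{-\beta}$ and $\sum_{i=1}^{\dmax}i^{-(\beta-1)}$ by $\zeta(\beta)$ and $\zeta(\beta-1)$ inside the ratio is not automatic, since numerator and denominator increase together; the paper needs a separate lemma (via \autoref{lem:AB_geq_ab}) showing that the tail ratio $\frac{1}{\dmax+1}$ is dominated once $\dmax$ exceeds an explicit threshold $\dmax_0$.
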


We also give a refined analysis for the case $\beta > 2.424$ and obtain the following improvement.

\begin{theorem}\label{thm:main2}
 For $\beta>2.424$, the \mvc\ problem (\MVC) in the $(\alpha,\beta)$-Model can be approximated with expected asymptotic approximation ratio \[\rho'=\newratio.\]
\end{theorem}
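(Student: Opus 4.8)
The plan is to keep the algorithmic skeleton of Theorem~\ref{thm:main} and to sharpen only the probabilistic accounting of the savings produced by the deterministic rounding. Recall that solving the half-integral Nemhauser--Trotter relaxation partitions $V$ into $V_0,V_{1/2},V_1$ in such a way that some optimum cover $C^\ast$ obeys $V_1\subseteq C^\ast\subseteq V_1\cup V_{1/2}$, whence the \emph{deterministic} bound $\OPT\geq\LP=|V_1|+\tfrac12|V_{1/2}|$. Including all of $V_{1/2}$ gives the trivial factor $2$; the entire improvement comes from producing, on the subgraph induced by $V_{1/2}$, a cover strictly cheaper than $|V_{1/2}|$ and from lower bounding the resulting savings in expectation over $\Gab$.

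First I would pin down the refined rounding and the good set $\Vgood$. Where the analysis behind Theorem~\ref{thm:main} harvests its saving from a single favourable local configuration, here I would let each eligible anchor $U$ reach a vertex of degree at most $2$ along a path of up to three edges and then round that path, setting an interior half-vertex to $0$ and raising its neighbours to $1$. Amortised over the touched vertices this covers every incident edge at the local ratio $\tfrac32$ instead of the trivial $2$; verifying feasibility of each rewrite and making distinct applications vertex-disjoint by a greedy peeling order is the deterministic core, and it is inherited with only notational changes from Theorem~\ref{thm:main}.

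The counting is the substance of the refinement and is where the cubed factor is born. In the $(\alpha,\beta)$-model with $y_i\approx e^{\alpha}/i^{\beta}$ the volume is $\sum_{i=1}^{\dmax}e^{\alpha}i^{-(\beta-1)}\approx e^{\alpha}\zeta(\beta-1)$, so the fraction of edge-endpoints sitting at vertices of degree at least $3$ is $q=\bigl(\zeta(\beta-1)-1-2^{-(\beta-1)}\bigr)/\zeta(\beta-1)$. Following an edge lands on such a vertex with probability about $q$, hence the probability that a three-step walk from $U$ never meets a degree-$\leq 2$ vertex is of order $q^{3}$ and the probability that the favourable path exists is $1-q^{3}$ — exactly the bracket of $\rho'$. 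I would fix $U$, compute this reachability by a without-replacement count over the configuration model that deletes the $\deg{U}$ endpoints at $U$ (the source of the $-\deg{U}$ and $-3+1$ corrections), multiply by the count-fraction of eligible anchors, the degree-$\geq 3$ vertices, which reproduces the factor $(\zeta(\beta)-1-2^{-\beta})/\zeta(\beta)$ already present in Theorem~\ref{thm:main}, then sum over the degree classes and let $\dmax=e^{\alpha/\beta}\to\infty$.

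The main obstacle is making the ``three independent draws'' picture honest: in the ACL configuration model the endpoints are matched without replacement, so a short path may revisit $U$, close into a cycle, or create a multi-edge, and $q^{3}$ is therefore only an estimate of the true failure probability that must be controlled by bounding these dependencies and the vanishing mass of self-loops and multi-edges uniformly in $\dmax$. A secondary point is that the ratio is a quotient of two random variables; rather than attack $\E[\mathrm{ALG}/\OPT]$ directly I would bound $\E[\mathrm{ALG}]$ against $\OPT\geq\LP$ together with concentration of the savings. Finally, I expect the threshold $\beta>2.424$ to emerge precisely as the regime in which the refined bracket $1-q^{3}$ exceeds the plain factor $2^{-\beta}/\zeta(\beta-1)$ of Theorem~\ref{thm:main}, so that the new saving is genuinely larger and $\rho'<\rho$.
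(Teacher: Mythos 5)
Your formula is right, but the mechanism you propose to prove it is not the paper's, and it contains a genuine gap. In the paper's refined analysis, \Autoref{alg:rounding}, the set $\Vgood$, and \autoref{lem:lemma5} are left completely untouched; the only thing that changes relative to \autoref{thm:main} is the probabilistic estimate. The exponent $3$ is not a walk length: it is the minimum number of copies that a vertex $u$ with $\deg{u}\geq 3$ places into the random matching. The refined lemma lower-bounds $\eta(u,U)$, the probability that $u$ is \emph{directly adjacent} to the set $U$ of degree-one and degree-two vertices, by decomposing over which copy of $u$ is the first to be matched into $U$; each copy misses $U$ with probability roughly $(N-\deg{U})/N$ where $N=\sum_{i=1}^{\dmax}\e^{\alpha}i^{-(\beta-1)}$ (the without-replacement corrections produce the terms $-\deg{U}-3+1$ and $-3+1$), and monotonicity in $\deg{u}$ allows the worst-case substitution $\deg{u}=3$, which is where the cube comes from. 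Your cube instead arises from distance-$3$ reachability together with a modified rounding rule that zeroes an interior vertex of a path whose interior vertices may have degree $\geq 3$. That step is unsound for vertex cover: setting such a vertex to $0$ forces \emph{all} of its off-path neighbours to $1$, a cost that the path's LP value cannot pay for, which is exactly why \Autoref{alg:rounding} only ever assigns $0$ to vertices of degree at most two. Enlarging the anchor set to distance-$3$ anchors would moreover require a substitute for \autoref{lem:lemma5} (whose disjoint-path argument uses that every anchor has an edge into the set of low-degree vertices), and you do not supply one. The numerical agreement of your $1-q^{3}$ with the paper's bracket is a coincidence of ``three draws, each missing with probability $q$''; the underlying events, and the algorithmic consequences drawn from them, are different.

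Your account of the threshold is also incorrect. Whenever both bounds are valid, the refined saving beats the old one for \emph{every} $\beta>2$: with $q=\frac{\zeta(\beta-1)-1-\frac{1}{2^{\beta-1}}}{\zeta(\beta-1)}$, the ratio of the two savings is $2^{\beta}\zeta(\beta-1)\left(1-q^{3}\right)\geq 2^{\beta}\zeta(\beta-1)(1-q)=2^{\beta}+2>1$, so the condition $\rho'<\rho$ imposes no constraint on $\beta$ at all. The constant $2.424$ enters elsewhere, in a purely technical step your outline never confronts: replacing the partial sums $\sum_{i=1}^{\dmax}i^{-\beta}$ and $\sum_{i=1}^{\dmax}i^{-(\beta-1)}$ by $\zeta(\beta)$ and $\zeta(\beta-1)$ while preserving the direction of the estimate. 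After discarding terms of faster decay, that step reduces to Inequality~\ref{eq:2424}, namely $\left(S_{\beta}-\frac{1}{2^{\beta}}\right)(\dmax-1)\leq\left(1+\frac{1}{2^{\beta}}\right)\frac{\dmax^{\beta+1}}{(\dmax+1)^{\beta}}$ with $S_{\beta}=\sum_{i=1}^{\dmax}\frac{1}{i^{\beta}}$, and this holds for large $\dmax$ precisely because $\zeta(\beta)-\frac{1}{2^{\beta}}<1+\frac{1}{2^{\beta}}$ when $\beta>2.424$. So the restriction on $\beta$ is an artefact of the partial-sum-to-zeta replacement, not of a comparison between the two analyses; any correct proof of the theorem has to handle this step, and it is missing from your proposal.
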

In \autoref{fig:Comparison} these two upper bounds $\rho$ and $\rho'$ are shown as functions of the parameter $\beta$.
\begin{figure}[htp]
 \centering
%
%
%
%
 \includegraphics[]{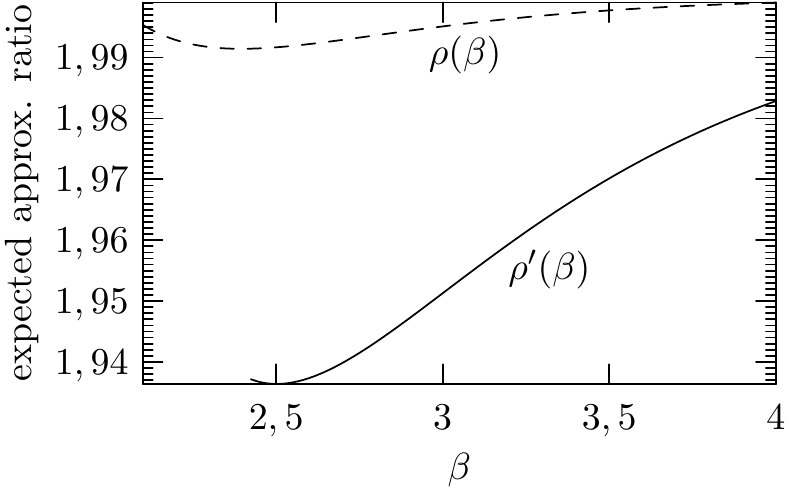}
 \caption{Comparison of first (\protect\tikz[baseline=-3pt]{\protect\path (0,0) edge[dashed,bend left= 10] (.5,0);}) and second (\protect\tikz[baseline=-3pt]{\protect\path (0,0) edge[bend left= 10] (.5,0);}) analysis in terms of functions of the parameter $\beta$, for $\beta > 2$ and $\beta > 2.424$, respectively.}
 \label{fig:Comparison}
\end{figure}

The paper is organized as follows.
In \autoref{subsec:alpha_beta} we describe the $(\alpha,\beta)$-model for Power Law Graphs, describe the random generation process and give a formal description of the model parameters.
In \autoref{subsec:vertex_cover} we give some background on the \MVC\ problem and briefly describe the \emph{half-integral solution} method proposed by Nemhauser and Trotter.
\Autoref{sec:vc_on_plg} presents our new approximation algorithm for \MVC\ in Power Law Graphs. This algorithm basically consists of a deterministic rounding procedure on a half-integral solution for \MVC.
In \autoref{sec:algorithm} we show that this rounding procedure yields an approximation ratio of $\frac{3}{2}$ in the subgraph induced by the low-degree vertices of the Power Law Graph and a 2-approximation in the residual graph.
In \autoref{subsec:bounds} we construct upper and lower bounds on the expected size of the half-integral solution in the induced subgraph of low-degree vertices and finally prove our main theorems.
We conclude the paper by giving a short summary and further research in \autoref{sec:conclusion}.
\section{Preliminaries}

\subsection{\texorpdfstring{$(\alpha,\beta)$}--Power Law Graphs}\label{subsec:alpha_beta}

In this section we describe the random \PLG-model proposed by Aiello, Chung and Lu in \cite{Aiello2001}, which we will denote as \Mab. This model considers a random graph with the following degree distribution depending on two given values $\alpha$ and $\beta$: 
For each $1\leq i \leq \Delta = \floor{e^{\frac{\alpha}{\beta}}}$ there are $y_i$ vertices of degree $i$ with 
\[
y_i=
\begin{cases}
  \floor{\frac{e^{\alpha}}{i^{\beta}}} & \text{if } i > 1 \text{ or } \sum^{\Delta}_{i=1} \floor{\frac{e^{\alpha}}{i^{\beta}}} \text{ is even}\\
  \floor{e^{\alpha}} + 1 & \text{otherwise.}
\end{cases}
\]
Here, $i$ and $y_i$ satisfy $\log y_i = \alpha - \beta \log i$. The variable $\alpha$ is the logarithm of the size of the graph and $\beta$ is the $\log$-$\log$ growth rate. Let \Gab\ be the set of all undirected graphs with multi-edges and self-loops on $n=\sum_{i=1}^{\Delta} y_i$ vertices which have $y_i$ vertices of degree $i$ ($1 \leq i \leq \Delta$). Then \Mab\ is the distribution on \Gab\ obtained in the following way \cite{Aiello2001}:
\begin{enumerate}
  \item Generate a set $L$ of $\deg{v}$ distinct copies of each vertex $v$.
  \item Generate a random matching on the elements of $L$.
  \item For each pair of vertices $u$ and $v$, the number of edges joining $u$ and $v$ in $\gab$ is equal to the number  of edges in the matching of $L$ which join copies of $u$ to copies of $v$.
\end{enumerate}

\begin{center}
 \usetikzlibrary{decorations.pathreplacing,shapes.arrows,positioning}

\begin{tikzpicture}
[
every label/.style={label distance=5pt},
vertex/.style={draw,shading=ball,ball color=black,circle, inner sep=2pt},
vertex2/.style={draw,shading=ball,ball color=black!25,circle, inner sep=2pt},
every edge/.style={draw,densely dotted,in=-90,out=-90},
every pin edge/.style={dotted,latex-,in=90,out=-45,shorten <=-4pt},
every pin/.style={dashed,pin distance=1cm}]

\fill[black!25,rounded corners=6pt] (-.25,.25) -- (.25,0.25) -- (.25,-1.25) -- (-0.25,-1.25) -- cycle;
\fill[black!25,rounded corners=6pt,xshift=1cm] (-0.25,0.25) -- (.25,0.25) -- (.25,-1.25) -- (-0.25,-1.25) -- cycle;

\fill[black!25,rounded corners=6pt,xshift=-1.5cm] (4.3,0.25) -- (4.7,0.25) -- (5,-1.25) -- (4,-1.25) -- cycle;
\fill[black!25,rounded corners=6pt] (4.3,0.25) -- (4.7,0.25) -- (5,-1.25) -- (4,-1.25) -- cycle;

\fill[black!25,rounded corners=6pt] (6.8,0.25) -- (7.2,0.25) -- (7.8,-1.25) -- (6.2,-1.25) -- cycle;
\fill[black!25,rounded corners=6pt,xshift=2cm] (6.8,0.25) -- (7.2,0.25) -- (7.8,-1.25) -- (6.2,-1.25) -- cycle;

\node[vertex,label={above:$v_{11}$}] (v1) at (0,0) {};
\node[vertex,label={above:$v_{12}$}] (v12) at (1,0) {};
\node at (2,-0.5) {$\dots$};
\node[vertex,label={above:$v_{21}$}] (v21) at (3,0) {};
\node[vertex,label={above:$v_{22}$}] (v22) at (4.5,0) {};
\node at (5.7,-0.5) {$\dots$};
\node[vertex,label={above:$v_{31}$}] (v31) at (7,0) {};
\node[vertex,label={above:$v_{32}$}] (v32) at (9,0) {};
\node at (10.25,-0.5) {$\dots$};


\node[vertex2] (v11) at (0,-1) {};

\node[vertex2] (v12) at (1,-1) {};

\node[vertex2] (v211) at (2.75,-1) {};
\node[vertex2] (v212) at (3.25,-1) {};

\node[vertex2] (v221) at (4.25,-1) {};
\node[vertex2] (v222) at (4.75,-1) {};

\node[vertex2] (v311) at (6.5,-1) {};
\node[vertex2] (v312) at (7,-1) {};
\node[vertex2] (v313) at (7.5,-1) {};

\node[vertex2] (v321) at (8.5,-1) {};
\node[vertex2] (v322) at (9,-1) {};
\node[vertex2] (v323) at (9.5,-1) {};

\draw[rounded corners] (v11) -- +(0,-.4) -| (v211);
\draw[rounded corners]  (v12)  -- +(0,-.5) -| (v311);
\draw[rounded corners]  (v221)  -- +(0,-.4) -| node (s) {}(v222);
\draw[rounded corners]  (v313)  -- +(0,-.4) -| node (m1) {}(v322);
\draw[rounded corners]  (v312)  -- +(0,-.5) -| node (m2) {}(v323);
\draw[rounded corners]  (v212) -- +(0,-.6) -| (v321);

\node[below left = .25 of m1] (lm) {multi-edge};
\node[below left = .25 of s] (ls) {self-loop};
\draw[-latex,dotted,rounded corners] (lm.east) ..controls +(1,0) and +(.5,-.5).. (m1.north);
\draw[-latex,dotted,rounded corners] (lm.east) ..controls +(.5,0) and +(.5,-.5).. (m2.north);
\draw[-latex,dotted,rounded corners] (ls.east) ..controls +(.5,0) and +(.5,-.5).. (s.north);





\end{tikzpicture}
\end{center}

As in \cite{Aiello2001}, in the following we will work with the real numbers $\frac{e^{\alpha}}{i^{\beta}}$, $e^{\frac{\alpha}{\beta}}$ instead of their integer counterparts. For $\beta > 2$ the error is a lower order term (c.f. \cite{Aiello2001}, remark on page 6). 

A graph $G \in\Gab$ has the following properties: The maximum degree of $G$ is $e^{\frac{\alpha}{\beta}}$, and for $\beta > 2$ the number of vertices is $n=\sum^{e^{\nicefrac{\alpha}{\beta}}}_{i=1} \frac{e^{\alpha}}{i^{{\beta}}} \approx \zeta(\beta)e^{\alpha}$ and the number of edges is $m=\frac{1}{2}\sum^{e^{\nicefrac{\alpha}{\beta}}}_{i=1} i \frac{e^{\alpha}}{i^{{\beta}}} \approx \frac{1}{2}\zeta(\beta-1)e^{\alpha}$ where the error terms are $o(n)$ and $o(m)$, respectively.

\subsection{LP-Relaxation and Half-Integral Solution for Min-VC}\label{subsec:vertex_cover}

In this section we give a brief outline of the \emph{Nemhauser-Trotter Theorem} stated in \cite{Nemhauser1975} and show how this is used to approximate \MVC\ in a graph $G=(V,E)$ as described by Hochbaum et. al. in \cite{Hochbaum1993}.

Nemhauser and Trotter considered the following $\LP$-relaxation, which applies to the more general \emph{weighted} vertex cover problem:
\begin{alignat*}{3}
    \text{minimize }\quad	& \sum_{i=1}^n w_i x_i, 	&&&\quad& \\
    \text{subject to }\quad	& x_i + x_j &\geq\,& 1, 	&& \text{ for each edge } \{v_i,v_j\}\in E, \\
				& x_i &\geq\,& 0,		&& \text{ for each vertex }v_i\in V,
\end{alignat*}
They show that there always exists an optimal solution $x$ for this \LP\ which is \emph{half-integral}, i.e. for all $i$, $x_i \in \left\{0,\frac{1}{2},1 \right\}$. Then they partition the set of vertices into subsets $P, Q, R \subseteq V$, such that $v_i \in P$ if $x_i=1$, $v_i \in Q$ if $x_i=\frac{1}{2}$ and $v_i \in R$ if $x_i=0$ in this solution.
They show that at least one optimal vertex cover in $G$ contains the set $P$, that each vertex in $R$ has all its neighbors in $P$ and -- moreover -- that each cover in $G$ has weight at least $\weight(P) + \frac{1}{2}\weight(Q)$. From this it follows that at least one optimal vertex cover in $G$ consists of the set $P$ and an optimal cover in the subgraph $H[Q]$ induced by $Q$.

Hochbaum et. al. \cite{Hochbaum1993} showed that an integer solution $y$ obtained by setting $y_i=1$ for all vertices $v_i \in Q \cup P$ and $y_i=0$ for all $v_i \in R$ is a 2-approximate solution for the \MVC\ problem in $G$.
Our approximation algorithm for \MVC\ in random Power Law Graphs will make use of a half-integral solution $x$ of the \LP-relaxation along with the properties described in the Nemhauser-Trotter Theorem in order to achieve an approximation ratio strictly less than 2.
\section{Approximation of Min-VC in \texorpdfstring{$(\alpha,\beta)$}--PLG}\label{sec:vc_on_plg}

In this section we present our main result, namely an approximation algorithm with expected approximation ratio \mainratio\ for the \MVC\ problem in $(\alpha,\beta)$-\PLG\ for $\beta > 2$. Furthermore a refined analysis yields an improved \emph{asymptotic} approximation ratio for the case $\beta > 2.424$. 

Let us first give an outline of this algorithm. On instance $\gab \in \Gab$ the algorithm starts with a half-integral solution $x:V \to \left\{0,\frac{1}{2}, 1\right\}$ of the associated \LP\ and uses some deterministic rounding procedure to generate an integral solution $y: V \to \{0,1\}$. 
We show that for the set $\Vgood = \bigcup_{v:\deg{v}\in\{1,2\}} \left(\{v\} \cup \N(v)\right)$ of degree-1 and degree-2 nodes and their neighbors in \gab, the rounding procedure satisfies $ y(\Vgood) \leq \frac{3}{2} \cdot x(\Vgood)$ and furthermore $x(\Vgood)$ is sufficiently large (in expectation) with respect to \Mab.

\subsection{Approximation Algorithm}\label{sec:algorithm}

Now, we describe our deterministic rounding procedure (\Autoref{alg:rounding}) on $\gab=(V,E)$ for $\gab\in\Gab$. First, the algorithm processes all nodes of the subset $V'=L \cup \N(L)$ where $L=\left\{v \in V \vert \left(\deg{v}=2, x(v)=\frac{1}{2}\right)\vee\left(\deg{v}=1\right)\right\}$ and provides a rounded integral solution $y$ with $y(V')\leq\frac{3}{2}\cdot x(V')$. Furthermore we show that $y(\Vgood\setminus V') \leq\frac{4}{3}\cdot x(\Vgood\setminus V')$ and $y(V\setminus\Vgood)\leq 2\cdot x(V\setminus\Vgood)$.

\begin{algorithm}
\SetNlSty{textbf}{(}{)}
\KwIn{$\gab=(V,E) , x:V \to \left\{0,\frac{1}{2},1\right\}$.}
\KwOut{$y:V \to \{0,1\}$.}
\BlankLine
\ForAll{$v \in V$}{
 $y(v) := x(v)$\;
 mark $v$ as \unproc\;
}
\nlset{0}$\comp$ $G'=(V',E')$ induced by $V'=L\cup\N(L)$ where $L=\left\{v \in V \vert \left(\deg{v}=2, x(v)=\frac{1}{2}\right)\vee\left(\deg{v}=1\right)\right\}$\;
\nlset{1}\ForAll{$v \in V$ with $\deg{v}=1$}{
 let $u$ be the neighbor of $v$ in \gab\;
 \aset\ $y(v)=0$;\quad \aset\ $y(u)=1$\;
}
\nlset{2}\ForAll{$P=uv_{1}v_{2}w \subset G'$ \unproc, $\deg{u}\geq3, \deg{v_{1}}=\deg{v_{2}}=2$}{
 \aset\ $y(u)=y(w)=y(v_{1}) = 1$\;
 \aset\ $y(v_{2})=0$\;
}
\nlset{3}\ForAll{$v \in V'$ \unproc, $\deg{v}=2 \wedge \exists u\in\N(v), \deg{u}\geq3$}{
\nlset{3.1} \ElseIf{$u$ \unproc, $w$ \proc}{
 \aset\ $y(v)=0$;\quad
 \aset\ $y(u)=1$\;
 }
\nlset{3.2} \ElseIf{both $u,w$ \unproc}{
 \aset\ $y(v)=0$;\quad
 \aset\ $y(u)=y(w)=1$\tcc*{with $x(u)\geq\frac{1}{2}$ and $x(w)\geq\frac{1}{2}$}
 }
\nlset{3.3} \If{both $u,w$ \proc}{
 \aset\ $y(v)=0$\;
 }
\nlset{3.4} \ElseIf{$u$ \proc, $w$ \unproc}{
 \aset\ $y(v)=0$;\quad
 \aset\ $y(w)=1$\tcc*{$y(u)=1$ already set and $x(w)\geq\frac{1}{2}$}
 }
}
\nlset{4}\ForAll{$v \in V'$ \unproc, $\deg{v}=2$}{
\nlset{4.1}\ElseIf{$u$ \unproc, $w$ \proc}{
 \aset\ $y(v)=0$;\quad
 \aset\ $y(u)=1$\;
 }
\nlset{4.2}\ElseIf{both $u,w$ \unproc}{
 \aset\ $y(v)=0$;\quad
 \aset\ $y(u)=y(w)=1$\tcc*{with $x(u)\geq\frac{1}{2}$ and $x(w)\geq\frac{1}{2}$}
 }
\nlset{4.3}\If{both $u,w$ \proc}{
 \aset\ $y(v)=0$\;
 }
\nlset{4.4}\ElseIf{$u$ \proc, $w$ \unproc}{
 \aset\ $y(v)=0$;\quad
 \aset\ $y(w)=1$\tcc*{$y(u)=1$ already set and $x(w)\geq\frac{1}{2}$}
 }
}
\nlset{5}\ForAll{$v \in V$}{
 \If{$x(v)=\frac{1}{2}$}{
  \aset\ $y(v)= 1$\tcc*{$y(v)= \min\{1,2\cdot x(v)\}$}
 }
}
\caption{\textsc{Deterministic Rounding}}
\label{alg:rounding}
\end{algorithm}

An analysis of the algorithm is provided by the following \autoref{lem:yInteger} and \autoref{lem:yThreeHalf}.

\begin{lemma}\label{lem:yInteger}
The assignment $y$ generated by \Autoref{alg:rounding} is an integer solution and satisfies $y(u)=1$ for all $u \in V'$ with $\deg{u} \geq 3$. 
 
\end{lemma}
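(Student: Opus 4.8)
The plan is to prove the two assertions separately: integrality by tracking the range of $y$ through the five phases, and the value on the high-degree vertices of $V'$ by a short structural argument that largely bypasses the phase-by-phase bookkeeping. For integrality I would follow each coordinate of $y$ through the algorithm. After the initialisation $y=x$ is half-integral, and every explicit assignment in phases (1)--(4) writes either $0$ or $1$; hence a coordinate can still equal $\tfrac12$ after phase (4) only if $x(v)=\tfrac12$ and $v$ was never touched. Phase (5) overwrites every such vertex with $y(v)=1$, so afterwards no coordinate equals $\tfrac12$ and $y\colon V\to\{0,1\}$. That the integral $y$ is still a \emph{cover} I would obtain from the invariant that whenever a line sets $y(a)=0$ all neighbours of $a$ simultaneously carry the value $1$; this is immediate in phase (1) (the unique neighbour $u$ is set to $1$) and in phase (2) (the two path-neighbours $v_{1}$ and $w$ of $v_{2}$ are set to $1$), and has to be checked in phases (3)--(4).

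For the value on the high-degree vertices I would first record the purely syntactic observation that no line ever assigns $0$ to a vertex of degree $\geq 3$: every ``$y(\cdot)=0$'' is applied to a degree-$1$ vertex, to the inner path-vertex $v_{2}$, or to a degree-$2$ vertex. Hence once a vertex of degree $\geq 3$ carries the value $1$ it retains it, and it suffices to exhibit a single moment at which such a $u$ holds $1$. Since $L$ contains only vertices of degree $\leq 2$, a vertex $u\in V'$ with $\deg{u}\geq3$ lies in $\N(L)\setminus L$ and therefore has a neighbour $v\in L$. If $\deg{v}=1$, then $u$ is its unique neighbour and phase (1) sets $y(u)=1$. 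Otherwise $\deg{v}=2$ and $x(v)=\tfrac12$, i.e.\ $v\in Q$; then the Nemhauser--Trotter property that every vertex of $R$ has all its neighbours in $P$ forces $u\notin R$, so $x(u)\in\{\tfrac12,1\}$. Consequently $u$ already holds $1$ after initialisation (if $x(u)=1$) or is set to $1$ in phase (5) (if $x(u)=\tfrac12$), and combined with the no-zeroing observation this yields $y(u)=1$ at termination.

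The delicate part is the feasibility invariant in phases (3)--(4), precisely in those subcases in which a neighbour of the vertex $v$ being set to $0$ is already \proc\ and is therefore not re-touched (as reflected by comments such as ``$y(u)=1$ already set''); there one cannot read the neighbour's value off the current line. I would discharge this by an induction on the processing order: had a \proc\ neighbour $a$ of $v$ been set to $0$ at an earlier moment, the very same invariant would have forced all of $a$'s neighbours---in particular $v$---to be set to $1$ and marked \proc, contradicting that $v$ is being processed now; hence every already-processed neighbour of $v$ carries the value $1$. Formulating and discharging this single invariant, rather than re-checking each of the symmetric subcases of (3) and (4) in isolation, is where the actual work of the proof concentrates.
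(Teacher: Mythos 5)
Your proof is correct, and for the substantive half of the lemma it takes a genuinely different route from the paper's. The paper argues by contradiction: if some $u\in V'$ with $\deg{u}\geq 3$ kept $x(u)=y(u)=\frac{1}{2}$, then its degree-2 neighbour $v_1\in L$ must have been processed ``from the other side'' by a second degree-2 vertex $v_2$, producing a path $uv_1v_2w$ that step \astep{2} would already have collapsed, forcing $y(u)=1$. You instead combine (i) the syntactic fact that no line of the algorithm ever assigns $0$ to a vertex of degree $\geq 3$, so a value of $1$ is permanent once held, (ii) the Nemhauser--Trotter/LP-feasibility bound $x(u)\geq\frac{1}{2}$ coming from $u$'s neighbour in $L$, and (iii) step \astep{5} as a backstop. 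Your version is more robust, and it in fact avoids a gap in the paper's argument: the paper overlooks that $v_1$ may have been processed in step \astep{1} as the neighbour of a degree-1 vertex, in which case no path of the step-\astep{2} form exists and $u$ genuinely reaches step \astep{5} with $y(u)=\frac{1}{2}$ --- a configuration the paper itself later concedes and handles inside the proof of \autoref{lem:yThreeHalf} (the case $\deg{v_2}=1$, resolved there with local ratio $\frac{4}{3}$). Because your argument is indifferent to when, and by which line, $u$ receives its $1$, it proves exactly the stated lemma without that case analysis; what it does not give (and what the paper's argument was apparently aiming at) is the stronger claim that $u$ is already integral after step \astep{4}, which is false in general and fortunately not needed. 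Your integrality argument is essentially the paper's. Your additional feasibility invariant (every zeroed vertex has all its neighbours at value $1$, discharged by induction over the processing order, plus the observation that untouched endpoints satisfy $y\geq x$) is sound, but note that the paper does not establish the cover property under this lemma at all --- it is verified piecewise in the proof of \autoref{lem:yThreeHalf} --- so there you are doing extra, though correct, work.
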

\begin{proof}
 Any high-degree neighbor of degree-1 vertices is set to $1$ in step \astep{1} of the algorithm.

 Since either step \astep{3} or \astep{4} is processing every single degree-2 vertex $v \in V$ with $x(v)=\frac{1}{2}$, there are no leftover vertices $v\in V'$ of degree $2$ with fractional values.

 Assume that there is a vertex $u \in V', \deg{u}\geq3$ and $x(u)=y(u)=\frac{1}{2}$. Then $u$ has at least one degree 2 neighbor $v_1$ with $x(v_1)=\frac{1}{2}$. Because of step \astep{3} and \astep{4} of the algorithm, $v_1$ must have been processed by another degree 2 vertex $v_2$, setting $y(v_1)=1$. This again introduces another neighbor $w$ of $v_2$ with $y(w)=1$ and leads to the situation of a path $uv_{1}v_{2}w$ described in step \astep{2}. In this case, the algorithm sets $y(u)=1$ and thus we have a contradiction to the above assumption.
\end{proof}

\begin{lemma}\label{lem:yThreeHalf}
 The assignment $y$ generated by \Autoref{alg:rounding} satisfies $y(\Vgood)\leq \frac{3}{2}\cdot x(\Vgood)$.
\end{lemma}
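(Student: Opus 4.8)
The plan is to split $\Vgood$ into the two pieces $V'$ and $\Vgood\setminus V'$ and to prove $y(\cdot)\leq\tfrac32\,x(\cdot)$ on each separately. First I would observe that $V'\subseteq\Vgood$: every $v\in L$ has $\deg{v}\in\{1,2\}$ and hence lies in $\Vgood$, and every vertex of $\N(L)$ is a neighbor of such a low-degree vertex and therefore also lies in $\Vgood$. Consequently $\Vgood=V'\cup(\Vgood\setminus V')$ is a disjoint union, and since $y$ and $x$ are additive over vertices it suffices to establish $y(V')\leq\tfrac32 x(V')$ and $y(\Vgood\setminus V')\leq\tfrac32 x(\Vgood\setminus V')$ and to add the two inequalities.

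For the piece $V'$ I would rewrite the target as $\sum_{v\in V'}\bigl(y(v)-\tfrac32 x(v)\bigr)\leq 0$ and verify it along the gadgets that steps \astep{1}--\astep{4} process. By \autoref{lem:yInteger} every high-degree vertex of $V'$ has $y=1$; moreover, whenever such a vertex is an endpoint of a triplet $uvw$ (steps \astep{3}/\astep{4}) or of a path $uv_1v_2w$ (step \astep{2}), it is adjacent to a degree-$2$ vertex of value $\tfrac12$, so the corresponding edge constraint forces $x\geq\tfrac12$ there. Reading $y(v)-\tfrac32 x(v)$ as a local charge, a degree-$2$ vertex rounded to $0$ contributes the deficit $-\tfrac34$, an endpoint with $y=1$ and $x\geq\tfrac12$ contributes at most $+\tfrac14$, and a degree-$1$ vertex contributes $-\tfrac32 x(v)\leq 0$. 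It then remains to check that each gadget sums to a nonpositive value: the degree-$1$ edge of step \astep{1} sums to $1-\tfrac32\bigl(x(u)+x(v)\bigr)\leq-\tfrac12$ by its edge constraint, the triplet sums to at most $\tfrac14-\tfrac34+\tfrac14=-\tfrac14$, and the path of step \astep{2} is exactly tight with $\tfrac14+\tfrac14-\tfrac34+\tfrac14=0$. This last, tight case is precisely what forces the constant $\tfrac32$.

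For the piece $\Vgood\setminus V'$ I would first characterize its elements: such a vertex is either a degree-$2$ vertex $v$ with $x(v)\in\{0,1\}$, for which no special rounding occurs and $y(v)=x(v)$, or a high-degree vertex $z$ all of whose low-degree neighbors are degree-$2$ vertices of value $1$ (a degree-$1$ neighbor, or a $\tfrac12$-valued degree-$2$ neighbor, would place $z$ in $\N(L)\subseteq V'$). The only vertices with charge above $0$ are the high-degree $z$ with $x(z)=\tfrac12$ that get rounded to $y(z)=1$ in step \astep{5}; each such $z$ is adjacent to $x=1$ degree-$2$ vertices (of charge $1-\tfrac32=-\tfrac12$ each), and a short count of the two possible sharing patterns (one anchor shared by two such $z$, or one $z$ using several anchors) keeps the combined ratio at most $\tfrac32$.

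The main obstacle I anticipate is the bookkeeping in the $V'$ charging rather than any single gadget estimate. I must show that the \proc/\unproc\ marking makes the gadgets genuinely partition the charge on $V'$, so that a high-degree endpoint shared by two adjacent degree-$2$ chains is set to $1$ and charged only once. Fortunately sharing only helps: a gadget that has surrendered a shared endpoint to a neighboring gadget simply carries one fewer $+\tfrac14$ surplus term and therefore remains nonpositive. Still, verifying this for every interleaving of steps \astep{2}--\astep{4}, and separately disposing of the degree-$2$ cycles that carry no high-degree anchor (where alternating $1/0$ rounding gives ratio at most $1$, and an odd length still stays below $\tfrac32$), is where the care is required.
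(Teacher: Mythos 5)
Your decomposition is where the proof breaks. You reduce the lemma to the two \emph{independent} inequalities $y(V')\leq\tfrac32\,x(V')$ and $y(\Vgood\setminus V')\leq\tfrac32\,x(\Vgood\setminus V')$, but the second inequality is false in general, so no amount of care in the gadget bookkeeping can rescue this route. Consider the graph consisting of an edge $\{z,z'\}$ with $\deg{z}=\deg{z'}=3$, where $z$ has two further neighbors $v_1,v_2$ of degree $2$, each $v_i$ attached to a pendant vertex $w_i$ of degree $1$, and symmetrically $v_1',v_2',w_1',w_2'$ for $z'$. The assignment $x(z)=x(z')=\tfrac12$, $x(v_i)=x(v_i')=1$, $x(w_i)=x(w_i')=0$ is an \emph{optimal} half-integral solution: it costs $5$, which matches the fractional matching $\{z,z'\},\{v_1,w_1\},\{v_2,w_2\},\{v_1',w_1'\},\{v_2',w_2'\}$ of value $5$. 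Here $L$ consists exactly of the four pendants, so $V'=L\cup\N(L)=\{w_1,w_2,w_1',w_2',v_1,v_2,v_1',v_2'\}$ and $\Vgood\setminus V'=\{z,z'\}$. Steps \astep{1}--\astep{4} touch only $V'$, and step \astep{5} rounds both $z$ and $z'$ up to $1$, giving $y(\Vgood\setminus V')=2$ while $\tfrac32\,x(\Vgood\setminus V')=\tfrac32$. Your proposed repair --- charging each step-\astep{5} vertex to adjacent ``anchor'' degree-$2$ vertices of value $1$ --- fails precisely because those anchors can lie in $\N(L)\subseteq V'$ (as they do here, their second neighbors being pendants), so your split has already spent their deficit on the other side of the partition. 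A further, separate inaccuracy: your claim that vertices of $\Vgood\setminus V'$ undergo ``no special rounding'' in steps \astep{1}--\astep{4} is also unsafe, since steps \astep{3}/\astep{4} iterate over all unprocessed degree-$2$ vertices of $V'$, including those of $V'\setminus L$ with $x$-value $1$, and the neighbors they force to $1$ need not lie in $V'$ and may even have $x$-value $0$; but the step-\astep{5} failure above is already decisive.

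This is exactly the point at which the paper proceeds differently: it never asserts the $\tfrac32$ bound separately on $\Vgood\setminus V'$. A leftover high-degree vertex $u$ with $x(u)=y(u)=\tfrac12$ after steps \astep{0}--\astep{4} is charged \emph{together with} a degree-$2$ neighbor $v$ having $x(v)=1=y(v)$ (a ``doublet'' with local ratio $\tfrac43$), or together with a degree-$2$ chain ending in a degree-$1$ vertex (local ratio $\tfrac43$), and an LP-optimality argument about the second neighbor of $v$ is invoked to control how these pairs interact with the gadgets of steps \astep{1}--\astep{4}. In other words, the paper's charging is allowed to cross the boundary between $V'$ and $\Vgood\setminus V'$, and the example above shows that any correct proof must permit such cross-boundary charging. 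Your analysis of the $V'$ part (which is essentially the paper's gadget analysis recast as a charge $y(v)-\tfrac32 x(v)$, including the correct observation that sharing of processed endpoints only helps, and the treatment of degree-$2$ cycles) is sound in spirit, but it cannot be quarantined from the $\Vgood\setminus V'$ part as you propose.
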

\begin{proof}

 The algorithm partitions the graph induced by $\Vgood$ into edge-disjoint subgraphs, namely stars whose leaves are degree-1 vertices and paths of length $\leq 4$ whose internal nodes are degree-2 vertices. We show that for each such subgraph $P_i$, $y(P_i)\leq \frac{3}{2} \cdot x(P_i)$ and furthermore $y(v)=1$ for each $v\in\Vgood$ which is contained in more than one such subgraph. 

 In step \astep{1} of the algorithm all degree-1 vertices and their neighbors are processed.

 In step \astep{2} the subgraphs are \unproc\ paths $P_i$ of length $3$. Since $P_i= \squartet$ contains two disjoint edges $\{u,v_1\}, \{v_2,w\}$, $x(P_i)\geq 2$ and particularly $x(v_2) + x(w) \geq 1$. Therefore $y(P_i)=3 \leq \frac{3}{2}\cdot x(P_i)$ holds via mapping \quartet (where the gray color indicates a \proc\ vertex) and $y$ restricted to $P_i$ (denoted as $y\restrict P_i$) is a vertex cover for $P_i$.

 In step \astep{3} all paths $P_i=\striplet$ are processed, where at least one of $u,w$ is of degree $\geq3$. In cases \astep{3.1}-\astep{3.4} the algorithm considers all possible combinations of some of these nodes being already processed.

 In case \astep{3.1} $u$ is marked \unproc, $w$ is already \proc\ and $x(u)\geq\frac{1}{2}$. The rounding algorithm sets $y(v)=0$ and $y(u)=1$, mapping \triplet{vertex}{\geq\negthickspace\nicefrac{1}{2}}{vertex2}{1}, again yielding a vertex cover $y\restrict P_i$ for $P_i$ with $y(P_i) \leq x(P_i)$.

 In case \astep{3.2} we have that both $u,w$ are marked as \unproc\ and since $x(v)=\frac{1}{2}$ we have that $x(u)\geq\frac{1}{2}$ and $x(w)\geq\frac{1}{2}$. The rounding algorithm sets $y(v)=0, y(u)=y(w)=1$, mapping \triplet{vertex}{\geq\negthickspace\nicefrac{1}{2}}{vertex}{\geq\negthickspace\nicefrac{1}{2}}, and since $x(u)\geq\frac{1}{2}$ and $x(w)\geq\frac{1}{2}$ we have that $y(P_i) \leq \frac{4}{3}\cdot x(P_i)$.

 In case \astep{3.3} both $u,w$ are marked as \proc\ and therefore $y(u)=y(w)=1$, since $u,w$ are adjacent to processed degree one or degree two vertices other than $v$. The algorithm sets $y(v)=0$, mapping \triplet{vertex2}{1}{vertex2}{1}. Hence $y \restrict P_i$ is a vertex cover for $P_i$ with $y(P_i) \leq x(P_i)$.

 In case \astep{3.4} $u$ is already \proc\ and $w$ is still marked \unproc. Since $x(v)=\frac{1}{2}$ we have that $x(w)\geq \frac{1}{2}$. The rounding algorithm sets $y(v)=0$ and $y(w)=1$, mapping \triplet{vertex2}{1}{vertex}{\geq\negthickspace\nicefrac{1}{2}}, and since $x(w)\geq\frac{1}{2}$ it yields a vertex cover $y\restrict P_i$ for $P_i$ with $y(P_i) \leq x(P_i)$.

 Step \astep{4} considers all remaining unprocessed vertices of degree 2. If $v$ is such a vertex with neighborhood $\N(v)=\{u,w\}$, the sub-cases \astep{4.1}-\astep{4.4} are treated analogously to cases \astep{3.1}-\astep{3.4} and the mapping $x\mapsto y$ achieves $y(P_i) \leq \frac{4}{3}\cdot x(P_i)$ on the considered paths $P_i$.

 After steps \astep{0}-\astep{4} of the algorithm there may still be some remaining high-degree vertices $u \in \Vgood, \deg{u}\geq 3$ with $x(u)=y(u)=\frac{1}{2}$.
 These are treated separately (and rounded to $y(u)=1$ together with all other vertices in $V\setminus (V'\setminus \Vgood)$) in step \astep{5} of the algorithm. We have to argue that $y(\Vgood) \leq \frac{3}{2}\cdot x(\Vgood)$ still holds true.

 We consider first the case that $u\in V', \deg{u}\geq 3$ and $x(u)=y(u)=\frac{1}{2}$. Then $u$ has a neighbor $v$ of degree $\leq 2$ with $x(v)=\frac{1}{2}$ and $y(v)=1$, and since $y(u)=\frac{1}{2}$ we have $\deg{v}=2$. Let $v_2$ be the other neighbor of $v$, then $\deg{v_2}=1$ (since otherwise the second neighbor $w$ of $v_2$ would give rise to a path of length $3$, containing also $u$ and hence would have been processed in step \astep{2}). But then locally on the set $\{u,v,v_2\}$ we have the mapping \zweisprung\ with a local ratio of $\frac{4}{3}$.

 Let us now assume $u \in \Vgood\setminus V', \deg{u}\geq 3$ and $x(u)=y(u)=\frac{1}{2}$. Then every degree-2 neighbor $v$ has $x(v)\neq \frac{1}{2}$, hence $x(v)=1$, and therefore $y(v)=1$. We show that $v \notin V'$, i.e. that $v$ was not processed by the algorithm and can be treated as a part of a subgraph disjoint to $G'$ in $\gab$. 
 Let $w\in\N(v)$ be the second neighbor of $v$ besides $u$. Then $x(w)=0$ since otherwise (in case $x(w)\geq \frac{1}{2}$) we could decrease $x(v)$ from $1$ to $\frac{1}{2}$ and still have a feasible half-integral solution, which would contradict the optimality of $x$.
 Therefore $v,w\notin V'$, which means that $v,w$ are not processed by the algorithm. Rounding $y(u)=1$, mapping \doublet, yields a vertex cover $y \restrict \{u,v\}$ with $y(\{u,v\})\leq \frac{4}{3}\cdot x(\{u,v\})$.

 We conclude that the assignment $y: V\mapsto \{0,1\}$ is a vertex cover of \gab\ with $y(\Vgood)\leq\frac{3}{2}\cdot x(\Vgood)$ and $y(V\setminus\Vgood) \leq 2 \cdot x(V\setminus\Vgood)$.
\end{proof}

\subsection{Expected Approximation Ratio}\label{subsec:bounds}

The following lemma shows how to retrieve an expected approximation ratio for our algorithm for \MVC\ in \gab.

\begin{lemma}\label{lem:ratio_lemma}
 If the rounding scheme $x \mapsto y$ satisfies $y(\Vgood) \leq \frac{3}{2} \cdot x(\Vgood)$ and $y(V \setminus \Vgood) \leq 2 \cdot x(V\setminus\Vgood)$ then this gives an approximation ratio \[\frac{y(V)}{\OPT} \leq \frac{y(V)}{x(V)} \leq \frac{x(\Vgood)}{x(V)} \cdot \frac{3}{2} + \frac{x(V\setminus\Vgood)}{x(V)} \cdot 2.\]
\end{lemma}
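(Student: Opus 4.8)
The plan is to reduce both inequalities to elementary facts, since the real work (the two local rounding guarantees) has already been done in \autoref{lem:yThreeHalf}. For the left inequality $\frac{y(V)}{\OPT} \leq \frac{y(V)}{x(V)}$, I would invoke that $x$ is an optimal solution of the LP-relaxation recalled in \autoref{subsec:vertex_cover}. Because that LP is a relaxation of the integral vertex cover problem, its optimal value $x(V)=\sum_i w_i x_i$ lower-bounds the cost $\OPT$ of an optimal integral cover, i.e.\ $x(V)\leq\OPT$. Since $y(V)\geq 0$ and $x(V)>0$ (assuming $G$ has at least one edge, the only nontrivial case), taking reciprocals and multiplying by $y(V)$ gives $\frac{y(V)}{\OPT}\leq\frac{y(V)}{x(V)}$.

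For the right inequality, the key observation is that $\Vgood$ and $V\setminus\Vgood$ form a partition of $V$, so both set-functions are additive across it: $y(V)=y(\Vgood)+y(V\setminus\Vgood)$ and likewise $x(V)=x(\Vgood)+x(V\setminus\Vgood)$. Substituting the two hypotheses $y(\Vgood)\leq\frac{3}{2}\,x(\Vgood)$ and $y(V\setminus\Vgood)\leq 2\,x(V\setminus\Vgood)$ yields $y(V)\leq\frac{3}{2}\,x(\Vgood)+2\,x(V\setminus\Vgood)$. Dividing through by $x(V)$ and splitting the resulting fraction produces precisely the stated right-hand side $\frac{x(\Vgood)}{x(V)}\cdot\frac{3}{2}+\frac{x(V\setminus\Vgood)}{x(V)}\cdot 2$, which one recognizes as a convex combination of $\frac{3}{2}$ and $2$ weighted by the fractions of the LP cost carried by $\Vgood$ and its complement.

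I expect no genuine obstacle here; the argument is essentially a two-line computation. The only two points deserving a word of justification are the disjointness and exhaustiveness of $V=\Vgood\cup(V\setminus\Vgood)$, which is immediate from the definitions, and the relaxation bound $x(V)\leq\OPT$, which is the standard estimate already implicit in the Nemhauser--Trotter framework. Accordingly, the substantive content of the statement is not this lemma itself but the assembly it performs: it converts the two \emph{local} per-subgraph rounding ratios into a single \emph{global} approximation ratio, reducing the remaining task (carried out in \autoref{subsec:bounds}) to bounding the expected ratio $x(\Vgood)/x(V)$ from below under \Mab.
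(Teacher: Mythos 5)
Your proof is correct: the paper actually states \autoref{lem:ratio_lemma} without any proof, treating it as immediate, and your two observations (the relaxation bound $x(V)\leq\OPT$ from LP optimality, and additivity of $x$ and $y$ over the partition $V=\Vgood\cup(V\setminus\Vgood)$ followed by substituting the two hypotheses and dividing by $x(V)$) are exactly the reasoning the paper implicitly relies on, e.g.\ in the proof of \autoref{thm:main}. Nothing is missing, and your remark that the lemma's role is to convert the two local rounding guarantees into a global ratio, leaving only the lower bound on $x(\Vgood)/x(V)$, matches how the paper proceeds in \autoref{subsec:bounds}.
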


In order to apply \autoref{lem:ratio_lemma} and to derive an expected approximation ratio for the algorithm, in the following we will give a lower bound on $\E[x(\Vgood)]$ and an upper bound on $x(V)$. The next lemma provides a lower bound on $x(\Vgood)$ in terms of the number of high-degree vertices adjacent to degree-1 and degree-2 nodes.

\begin{lemma}\label{lem:lemma5}
 Let $\gab[\Vgood]$ be the subgraph of \gab\ induced by $\Vgood$.
 For every optimal half-integral solution $x$ for the \MVC\ \LP, the size of the half-integral solution restricted to $\Vgood$ is lower-bounded by the size of the high-degree neighborhood of degree-1 and degree-2 vertices:
 \[x(\Vgood)\geq \frac{1}{2} \cdot \big\lvert\left\{u \in V \vert \deg{u}\geq 3 \wedge \exists v \in \N(u), \deg{v}\in\{1,2\} \right\}\big\rvert\]
\end{lemma}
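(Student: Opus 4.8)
The plan is to reduce the statement to a short charging argument. Write
$W=\{u\in V \mid \deg{u}\geq 3 \wedge \exists v\in\N(u),\ \deg{v}\in\{1,2\}\}$
for the set appearing on the right-hand side, so the goal is exactly $x(\Vgood)\geq\frac12|W|$. First I would record the inclusion $W\subseteq\Vgood$: every $u\in W$ is, by definition of $W$, a neighbor of some vertex $v$ of degree $1$ or $2$, and hence $u\in\{v\}\cup\N(v)\subseteq\Vgood$. This means all the $x$-values I will charge against live inside $\Vgood$.

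Since $x$ is half-integral, I split $W$ according to the value $x(u)\in\{0,\tfrac12,1\}$ into $W_+=\{u\in W \mid x(u)\geq\frac12\}$ and $W_0=\{u\in W \mid x(u)=0\}$. For $W_+$ the bound is immediate: each such $u$ lies in $\Vgood$ and contributes $x(u)\geq\frac12$, so $\sum_{u\in W_+}x(u)\geq\frac12|W_+|$. The substance is in handling $W_0$. Here I would use only \emph{feasibility} of $x$: if $x(u)=0$, then every edge $\{u,z\}\in E$ forces $x(z)\geq 1$, and by half-integrality $x(z)=1$. In particular the low-degree neighbor $v$ witnessing $u\in W$ satisfies $\deg{v}\in\{1,2\}$ and $x(v)=1$. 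I would fix, for each $u\in W_0$, one such low-degree neighbor $\phi(u)$; every vertex in the image $\phi(W_0)$ lies in $\Vgood$ and has $x$-value $1$.

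The quantitative heart of the argument is controlling the multiplicity of $\phi$. A vertex $v$ with $\deg{v}\leq 2$ has at most two neighbors, so it can be the chosen witness $\phi(u)$ for at most $\deg{v}\leq 2$ distinct $u$; thus every fiber of $\phi$ has size at most $2$, giving $|\phi(W_0)|\geq\frac12|W_0|$. Since each vertex of $\phi(W_0)$ carries $x$-value $1$, this yields $\sum_{v\in\phi(W_0)}x(v)=|\phi(W_0)|\geq\frac12|W_0|$. Finally I would observe that $W_+$ consists of vertices of degree $\geq 3$ while $\phi(W_0)$ consists of vertices of degree $\leq 2$, so the two sets are disjoint subsets of $\Vgood$; using nonnegativity of $x$ to restrict the full sum $x(\Vgood)$ to $W_+\cup\phi(W_0)$ gives
\[
x(\Vgood)\ \geq\ \sum_{u\in W_+}x(u)+\sum_{v\in\phi(W_0)}x(v)\ \geq\ \tfrac12|W_+|+\tfrac12|W_0|\ =\ \tfrac12|W|.
\]

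I expect the main obstacle to be the charging step for $W_0$, namely ensuring no low-degree vertex is counted on behalf of too many high-degree neighbors; this is precisely where the degree bound $\deg{v}\leq 2$ enters, capping the fiber sizes of $\phi$ at two and making the tight case (a degree-$2$ vertex with both neighbors in $W_0$, where $x(v)=1$ pays $\frac12+\frac12$) work out exactly. It is worth noting that optimality of $x$ is not really needed: feasibility together with half-integrality suffices, so the stated hypothesis is stronger than what the proof consumes.
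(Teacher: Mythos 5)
Your proof is correct, and it is built on the same combinatorial kernel as the paper's own proof: every vertex of the set in question (your $W$, the paper's $Y$) selects one incident neighbor of degree at most $2$, and since such a witness has at most two neighbors it can be selected at most twice. The two arguments differ, however, in how they extract the value $\frac{1}{2}$ per vertex from the \LP. The paper keeps the selected \emph{edges}: the image of its witness map $f:Y\to E(X,Y)$ (with $X$ the set of degree-$1$ and degree-$2$ vertices) decomposes into vertex-disjoint paths $Q_1,\dots,Q_m$ of length at most $2$, each containing at most two vertices of $Y$; the edge constraint applied to one edge of each path gives $x(Q_i)\geq 1$, hence $x(\Vgood)\geq m\geq\frac{1}{2}\lvert Y\rvert$. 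That mechanism uses only feasibility and nonnegativity, so it is valid for \emph{any} feasible fractional solution --- neither optimality nor half-integrality is consumed. Your argument instead splits on the half-integral value of $u$: vertices with $x(u)\geq\frac{1}{2}$ pay for themselves, and vertices with $x(u)=0$ charge a witness, which feasibility forces to value $1$. This genuinely consumes half-integrality --- without it the dichotomy $x(u)\geq\frac{1}{2}$ versus $x(u)=0$ is not exhaustive (consider $x(u)=\frac{1}{3}$) --- though, as you correctly observe, it does not consume optimality, so your hypothesis is still weaker than the one stated in the lemma. What your route buys is that no structural claim about the selected edge set is needed, since disjointness of the two charged sets is immediate from the degree threshold; what the paper's route buys is the slightly more general conclusion that the bound holds for every feasible solution of the \LP, half-integral or not.
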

\begin{proof}
 Let $\Vgood = X \cup Y, X=\{v \in V \vert \deg{v}\in\{1,2\}\}$ and $Y=\{u \in V \vert \deg{u}\geq 3 \wedge \exists v \in \N(u), \deg{v}\in\{1,2\}\}$. Choose some arbitrary function $f:Y \to E(X,Y)$ such that for every $u \in Y, f(u)=\{u,v\}$ for some $v\in X$ adjacent to $u$. $f(Y)$ consists of pairwise disjoint paths $Q_{1},\dots,Q_{m}$ of length $\leq 2$, such that each path contains one or two vertices from $Y$. This implies  $x(\Vgood)\geq m \geq \frac{\lvert Y \rvert}{2}$.
\end{proof}

\subsubsection{First Analysis}\label{subsec:FirstAnalysis}

We will now estimate the expected number of high-degree vertices adjacent to vertices of degree one or two, which -- combined with the preceding \autoref{lem:lemma5} -- gives a lower bound on $\E[x(\Vgood)]$. We prove the following theorem:

\begin{theorem}\label{thm:ExVgood}
 \begin{align}
  \E[x(\Vgood)] &\geq \frac{1}{2}\cdot\E\left[\big\lvert\left\{u \in V \vert \deg{u}\geq 3 \wedge \exists v \in \N(u), \deg{v}\in\{1,2\} \right\}\big\rvert\right]\nonumber\\
  &= \frac{1}{2}\;\cdot\; \sum_{\mathclap{u:\deg{u}\geq 3}} \eta(u)\label{eqn:ExVgoodEta}\\
  &\geq \frac{\e^{\alpha}}{2^{\beta}}\cdot\frac{\zeta(\beta)-1-\frac{1}{2^{\beta}}}{\zeta(\beta-1)}\label{eqn:ExVgood}
 \end{align}
 where $\eta(u)$ is the probability that $u\in V$ has a neighbor in the set of vertices of degree one or two.
\end{theorem}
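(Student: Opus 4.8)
The first inequality I would simply read off from \autoref{lem:lemma5} by taking expectations of both sides. For the equality in \eqref{eqn:ExVgoodEta}, my plan is to write the cardinality as $\sum_{u:\deg{u}\geq 3}\mathbf{1}[\exists v\in\N(u):\deg{v}\in\{1,2\}]$ and invoke linearity of expectation; the expectation of the $u$-th indicator is $\eta(u)$ by definition, so the whole problem collapses to a uniform lower bound on $\eta(u)$ over the vertices $u$ with $\deg{u}\geq 3$.

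For that lower bound the idea is to replace the unwieldy event ``$u$ has a neighbor of degree $1$ or $2$'' by a single, much smaller edge event that is still large enough to give the target. Recall the generation process: each vertex $v$ contributes $\deg{v}$ copies, the total number of copies being $T=\sum_{i=1}^{\dmax} i\,y_i \approx \zeta(\beta-1)e^{\alpha}$, and a uniformly random perfect matching on these copies fixes the edges. Of the $T$ copies, exactly $2y_2 = e^{\alpha}/2^{\beta-1}$ belong to degree-2 vertices. I would fix $u$ with $\deg{u}\geq 3$ (so $u$ itself is not a degree-2 vertex) and single out one of its copies; in the random matching this copy is paired with a uniformly chosen one of the remaining $T-1$ copies, so it lands on a degree-2 copy with probability $\frac{2y_2}{T-1}\geq\frac{2y_2}{T}$. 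Whenever this occurs, $u$ gains a degree-2 neighbor, which is in particular a neighbor of degree in $\{1,2\}$. Hence $\eta(u)\geq \frac{2y_2}{T}=\frac{1}{2^{\beta-1}\zeta(\beta-1)}$, a bound independent of $u$.

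It then remains to sum. The number of vertices of degree at least $3$ is $\sum_{i=3}^{\dmax} y_i \approx e^{\alpha}(\zeta(\beta)-1-2^{-\beta})$, whence
\begin{align*}
  \frac{1}{2}\sum_{u:\deg{u}\geq 3}\eta(u)
  &\geq \frac{1}{2}\cdot\frac{1}{2^{\beta-1}\zeta(\beta-1)}\cdot e^{\alpha}\Bigl(\zeta(\beta)-1-\tfrac{1}{2^{\beta}}\Bigr)\\
  &= \frac{e^{\alpha}}{2^{\beta}}\cdot\frac{\zeta(\beta)-1-\frac{1}{2^{\beta}}}{\zeta(\beta-1)},
\end{align*}
which is exactly \eqref{eqn:ExVgood}.

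The one genuinely delicate point, and the main obstacle, is the dependence introduced by the random matching: the exact $\eta(u)$ would require an inclusion--exclusion over all $\deg{u}$ copies of $u$, together with bookkeeping for self-loops and multi-edges. The device that avoids all of this is to discard every copy of $u$ but one and every target but the degree-2 copies; the surviving single-draw event is a genuine subevent of ``$u$ has a degree-$\{1,2\}$ neighbor'', so it legitimately lower-bounds $\eta(u)$, and the arithmetic above shows it is already precisely strong enough. (Keeping all $\deg{u}$ copies --- producing a complementary probability with an exponent equal to the degree --- is what the sharper second analysis needs.) I would also flag, following \cite{Aiello2001}, that throughout I treat $y_i$ and $T$ as the real numbers $e^{\alpha}/i^{\beta}$ and $\zeta(\beta-1)e^{\alpha}$, the resulting error being of lower order for $\beta>2$.
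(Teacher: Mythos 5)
Your first three steps coincide with the paper's: the expectation of \autoref{lem:lemma5}, the indicator/linearity argument for \eqref{eqn:ExVgoodEta}, and the single-draw bound on $\eta(u)$ (condition on one copy of $u$, which meets a degree-2 copy with probability $\frac{2y_2}{T-1}\geq\frac{2y_2}{T}$) are exactly the paper's \autoref{lem:etaLemma}. Up to that point you are on the paper's path.

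The gap is in your final display. What your argument honestly yields is
\[
\E[x(\Vgood)]\;\geq\;\frac{\e^{\alpha}}{2^{\beta}}\cdot\frac{\sum_{i=1}^{\dmax}i^{-\beta}-1-2^{-\beta}}{\sum_{i=1}^{\dmax}i^{-(\beta-1)}},
\]
with partial sums, not zeta values (this is the paper's Inequality~\ref{eqn:ExVgoodSum}). Replacing the denominator sum by the larger $\zeta(\beta-1)$ is legitimate, since it only decreases the bound; but replacing the vertex count $\lvert\{u:\deg{u}\geq3\}\rvert=\e^{\alpha}\bigl(\sum_{i=1}^{\dmax}i^{-\beta}-1-2^{-\beta}\bigr)$ by $\e^{\alpha}\bigl(\zeta(\beta)-1-2^{-\beta}\bigr)$ goes the wrong way: the partial sum is strictly smaller than $\zeta(\beta)$, so you are substituting an over-count into a lower bound. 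Appealing to "lower-order error" does not repair this, because \eqref{eqn:ExVgood} is an exact inequality (it feeds \autoref{thm:main}, which, unlike \autoref{thm:main2}, is not stated asymptotically); your chain only proves the target minus an additive deficit of order $\Theta(\dmax)=o(\e^{\alpha})$, i.e.\ something strictly below the right-hand side of \eqref{eqn:ExVgood} for every finite graph. Moreover, once you have relaxed the denominator to $\zeta(\beta-1)$, the quantity you are left with, $\frac{\e^{\alpha}}{2^{\beta}}\bigl(\sum_{i=1}^{\dmax}i^{-\beta}-1-2^{-\beta}\bigr)/\zeta(\beta-1)$, is itself strictly smaller than the target, so no further argument can close the gap from that point: the slack you gave away in the denominator is exactly what is needed to absorb the numerator's tail deficit. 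The paper instead keeps both partial sums and compares ratios: by \autoref{lem:AB_geq_ab}, $\frac{A}{B}\geq\frac{A+a}{B+b}$ iff $\frac{A}{B}\geq\frac{a}{b}$, the tail ratio satisfies $\frac{a}{b}\leq\frac{1}{\dmax+1}$, and so everything reduces to showing $\frac{\sum_{i=1}^{\dmax}i^{-\beta}-1-2^{-\beta}}{\sum_{i=1}^{\dmax}i^{-(\beta-1)}}\geq\frac{1}{\dmax+1}$, which the paper proves for all $\dmax\geq\dmax_{0}$ with an explicit $\dmax_{0}\geq 8$. That dedicated lemma is precisely the missing piece in your write-up.
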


In order to provide bounds on the probability $\eta(u)$ for a vertex $u$ of degree $d$ of having a degree-1 or degree-2 neighbor, we consider how edges are generated in the random matching procedure of the distribution $M(\alpha,\beta)$: $\deg{u}$ copies of $u$ are randomly matched with the copies of the remaining vertices $v \in V, v\neq u$. We use the following lower bound on $\eta(u)$.

\begin{lemma}\label{lem:etaLemma}
 For every $u$ with $\deg{u}\geq 3$, $\eta(u)\geq \frac{1}{2^{\beta-1}\cdot\sum_{i=1}^{\dmax} \frac{1}{i^{\beta-1}}}$.
\end{lemma}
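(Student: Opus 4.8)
The plan is to bound from below the probability $\eta(u)$ that a fixed vertex $u$ of degree $d=\deg{u}\geq 3$ has at least one neighbor of degree $1$ or $2$, by isolating a single copy of $u$ and estimating the chance that the random matching sends it to a copy of a low-degree vertex. First I would recall the generation model from \autoref{subsec:alpha_beta}: each vertex $v$ contributes $\deg{v}$ copies to the set $L$, and a uniform random perfect matching on $L$ is drawn. The total number of copies is $|L| = \sum_{v} \deg{v} = 2m \approx \zeta(\beta-1)e^{\alpha}$, and the number of copies belonging to degree-1 and degree-2 vertices is $\sum_{i\in\{1,2\}} i\cdot y_i \approx e^{\alpha}\bigl(1 + \tfrac{2}{2^{\beta}}\bigr)$. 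Rather than track all $d$ copies of $u$ jointly, I would fix one designated copy $c$ of $u$ and ask only for the probability that the matching partner of $c$ is a copy of some degree-1 or degree-2 vertex; since $\eta(u)$ is the probability that at least one copy of $u$ lands on a low-degree vertex, this single-copy event already gives a valid lower bound on $\eta(u)$.

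The key computation is the probability that the partner of the fixed copy $c$ is a low-degree copy. Conditioning on the matching as a sequential pairing, the partner of $c$ is (close to) uniformly distributed over the remaining $|L|-1$ copies, so this probability is approximately
\[
\frac{\#\{\text{low-degree copies available}\}}{|L|-1}.
\]
I would bound the numerator below by just the contribution of the degree-2 vertices (the worst case among $\{1,2\}$ in the sense of giving the cleaner denominator in the claimed bound), namely $2y_2 \approx 2\cdot \frac{e^{\alpha}}{2^{\beta}} = \frac{e^{\alpha}}{2^{\beta-1}}$, and bound the denominator above by $|L| \approx \zeta(\beta-1)e^{\alpha} = e^{\alpha}\sum_{i=1}^{\dmax}\frac{1}{i^{\beta-1}}$. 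Taking the ratio, the factors of $e^{\alpha}$ cancel and I obtain
\[
\eta(u) \;\geq\; \frac{e^{\alpha}/2^{\beta-1}}{e^{\alpha}\sum_{i=1}^{\dmax}\frac{1}{i^{\beta-1}}}
\;=\; \frac{1}{2^{\beta-1}\cdot\sum_{i=1}^{\dmax}\frac{1}{i^{\beta-1}}},
\]
which is exactly the claimed inequality.

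\textbf{The main obstacle} I expect is making the ``partner is uniform over the remaining copies'' statement rigorous and controlling the error so that the inequality holds cleanly rather than only up to lower-order terms. Two subtleties arise: one must exclude the possibility that the partner of $c$ is another copy of $u$ itself (a self-loop copy), and one must account for the fact that in a single random matching the events ``copy $c_1$ of $u$ is matched to a low-degree copy'' and ``copy $c_2$ of $u$ is matched to a low-degree copy'' are not independent, so that estimating $\eta(u)$ via a single copy sacrifices a constant factor but is safe as a lower bound. I would argue the single-copy conditional probability is at least the stated ratio by noting that removing $c$ and its partner from the pool only decreases the available low-degree copies and the total pool by bounded amounts, and that using the real numbers $\tfrac{e^{\alpha}}{i^{\beta}}$ in place of their integer floors introduces only an $o(1)$ error for $\beta>2$ (as already justified in \autoref{subsec:alpha_beta}), so the inequality is preserved. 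The rest is the routine substitution of $n$, $m$, and the degree counts $y_i$ into the ratio above.
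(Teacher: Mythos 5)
Your proposal is correct and follows essentially the same route as the paper: both lower-bound $\eta(u)$ by the probability that a single fixed copy of $u$ is matched to a copy of a degree-2 vertex, compute that probability as $\frac{2y_2}{(\sum_{v}\deg{v})-1}$, and then drop the $-1$ in the denominator to obtain the clean bound $\frac{1}{2^{\beta-1}\sum_{i=1}^{\dmax}i^{-(\beta-1)}}$. The subtleties you flag (self-loop copies, dependence among the $d$ copies of $u$) are handled exactly as you suggest—monotonicity of the single-copy event makes them harmless—so your argument matches the paper's proof in substance.
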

\begin{proof}
 \begin{align*}
  \eta(u) &\geq \Pr(\text{the first copy of $u$ is neighbor of a degree-2-node})\\
  &= \frac{2\cdot \#\text{deg-2-nodes}}{\left(\sum_{v\in V}\deg{v}\right)-1}\\
  &\geq \frac{2\cdot\frac{\e^{\alpha}}{2^{\beta}}}{\sum_{i=1}^{\dmax}i\cdot\frac{\e^{\alpha}}{i^{\beta}}} = \frac{\frac{1}{2^{\beta-1}}}{\sum_{i=1}^{\dmax}\frac{1}{i^{\beta-1}}}, \text{ where } \dmax=\e^{\frac{\alpha}{\beta}} \text{ is the maximum degree of }\gab.
 \end{align*}
\end{proof}

In Equation~\ref{eqn:ExVgoodEta} we substitute $\eta(u)$ by the bound given in \autoref{lem:etaLemma} and obtain:
\begin{align}
 \E[x(\Vgood)] &\geq \frac{1}{2}\;\cdot\;\sum_{\mathclap{u:\deg{u}\geq 3}}\eta(u)\nonumber\\
 &= \frac{1}{2}\cdot\left(\sum_{i=1}^{\dmax}\frac{\e^{\alpha}}{i^{\beta}}-\e^{\alpha}-\frac{\e^{\alpha}}{2^{\beta}}\right)\cdot\frac{1}{2^{\beta-1}\cdot\sum_{i=1}^{\dmax} \frac{1}{i^{\beta-1}}}\nonumber\\
 &= \frac{\e^{\alpha}}{2^{\beta}}\cdot\frac{\sum_{i=1}^{\dmax}\frac{1}{i^{\beta}}-1-\frac{1}{2^{\beta}}}{\sum_{i=1}^{\dmax}\frac{1}{i^{\beta-1}}}\label{eqn:ExVgoodSum}
\end{align}

We will now show that in Inequality~\ref{eqn:ExVgoodSum} we can replace the terns $\sum_{i=1}^{\dmax}\frac{1}{i^{\beta}}$ and $\sum_{i=1}^{\dmax}\frac{1}{i^{\beta-1}}$ by $\zeta(\beta)$ and $\zeta(\beta-1)$, respectively. We make use of the following lemma.

\begin{lemma}\label{lem:AB_geq_ab}
 For $A,B,a,b > 0$, $\frac{A}{B}\geq\frac{A+a}{B+b} \;\Longleftrightarrow\; \frac{A}{B}\geq\frac{a}{b}$.
\end{lemma}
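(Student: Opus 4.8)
The plan is to prove the equivalence by clearing denominators, exploiting the positivity of $A,B,a,b$ throughout so that every algebraic manipulation preserves the direction of the inequality and is therefore reversible. This turns the statement into a short chain of equivalences rather than two separate implications.

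First I would note that since $B>0$ and $B+b>0$, multiplying both sides of $\frac{A}{B}\geq\frac{A+a}{B+b}$ by the positive quantity $B(B+b)$ is an equivalence, giving $A(B+b)\geq B(A+a)$. Expanding both sides yields $AB+Ab\geq AB+aB$, and cancelling the common term $AB$ leaves $Ab\geq aB$. Finally, dividing by $bB>0$ — again a reversible, positivity-preserving step — produces $\frac{A}{B}\geq\frac{a}{b}$. Reading the chain
\[
\frac{A}{B}\geq\frac{A+a}{B+b} \iff A(B+b)\geq B(A+a) \iff Ab\geq aB \iff \frac{A}{B}\geq\frac{a}{b}
\]
from left to right and from right to left establishes both directions of the claimed equivalence at once.

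The only point that requires any care is that each multiplier be strictly positive, so that the inequality sign never flips and every step is a genuine equivalence rather than a one-way implication. This is exactly what the hypothesis $A,B,a,b>0$ guarantees, since it forces $B,\,b,\,B+b>0$. There is no real obstacle here: the statement is the familiar mediant inequality, and the same computation shows the stronger fact that the mediant $\frac{A+a}{B+b}$ always lies between $\frac{A}{B}$ and $\frac{a}{b}$. I would present the proof simply as the displayed chain above, which is the most convenient form for the later application of replacing the finite partial sums $\sum_{i=1}^{\dmax}\frac{1}{i^{\beta}}$ and $\sum_{i=1}^{\dmax}\frac{1}{i^{\beta-1}}$ by $\zeta(\beta)$ and $\zeta(\beta-1)$.
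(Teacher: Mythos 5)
Your proof is correct. The paper actually states this lemma \emph{without} proof, treating it as elementary, so there is no paper argument to compare against; your clearing-of-denominators chain $\frac{A}{B}\geq\frac{A+a}{B+b} \iff A(B+b)\geq B(A+a) \iff Ab\geq aB \iff \frac{A}{B}\geq\frac{a}{b}$ is exactly the standard argument one would supply, and each step is genuinely reversible thanks to the positivity of $B$, $b$, and $B+b$. The observation that this is the mediant inequality is accurate, though not needed for the application in the paper.
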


Therefore, in order to show 
\[\E[x(\Vgood)] \geq \frac{e^{\alpha}}{2^{\beta}}\cdot\frac{\zeta(\beta)-1-\frac{1}{2^{\beta}}}{\zeta(\beta-1)},\]
it is sufficient to show that there exists a $\dmax_{0}$ such that for all $\dmax\geq \dmax_{0}$ the following holds
\[\frac{\sum_{i=1}^{\dmax}\frac{1}{i^{\beta}}-1-\frac{1}{2^{\beta}}}{\sum_{i=1}^{\dmax}\frac{1}{i^{\beta-1}}} \geq \frac{\frac{1}{(\dmax+1)^{\beta}}}{\frac{1}{(\dmax+1)^{\beta-1}}} = \frac{1}{\dmax+1}.\]

This is provided by the following lemma.
\begin{lemma}
 There exists a $\dmax_{0}\geq 8$ such that for all $\dmax\geq\dmax_{0}$, $\frac{\sum_{i=1}^{\dmax}\frac{1}{i^{\beta}}-1-\frac{1}{2^{\beta}}}{\sum_{i=1}^{\dmax}\frac{1}{i^{\beta-1}}} \geq \frac{1}{\dmax+1}$.
\end{lemma}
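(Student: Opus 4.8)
The plan is to use that, for $\beta>2$, the right-hand side $\frac{1}{\dmax+1}$ vanishes as $\dmax\to\infty$ while the left-hand side stays bounded below by a positive constant depending only on $\beta$; the inequality then holds for all sufficiently large $\dmax$. First I would simplify the numerator by peeling off the $i=1$ and $i=2$ terms, which equal exactly $1$ and $\frac{1}{2^{\beta}}$:
\[
\sum_{i=1}^{\dmax}\frac{1}{i^{\beta}}-1-\frac{1}{2^{\beta}} \;=\; \sum_{i=3}^{\dmax}\frac{1}{i^{\beta}}.
\]
For $\dmax\geq 3$ this is positive and non-decreasing in $\dmax$, so in particular it is bounded below by its leading term $\frac{1}{3^{\beta}}$.

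Next I would bound the denominator from above. Since $\beta>2$ means $\beta-1>1$, the series $\sum_{i\geq 1} i^{-(\beta-1)}$ converges and every partial sum obeys $\sum_{i=1}^{\dmax}\frac{1}{i^{\beta-1}}\leq\zeta(\beta-1)$. Combining the two estimates, for all $\dmax\geq 3$,
\[
\frac{\sum_{i=1}^{\dmax}\frac{1}{i^{\beta}}-1-\frac{1}{2^{\beta}}}{\sum_{i=1}^{\dmax}\frac{1}{i^{\beta-1}}} \;\geq\; \frac{1}{3^{\beta}\,\zeta(\beta-1)} \;=:\; c_{\beta}\;>\;0,
\]
a lower bound independent of $\dmax$.

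Finally I would fix the threshold. One has $\frac{1}{\dmax+1}\leq c_{\beta}$ precisely when $\dmax\geq 3^{\beta}\zeta(\beta-1)-1$, so setting
\[
\dmax_{0} \;:=\; \max\!\left\{\, 8,\ \ceil{3^{\beta}\zeta(\beta-1)-1}\,\right\}
\]
ensures both $\dmax_{0}\geq 8$ and, for every $\dmax\geq\dmax_{0}$, that $\frac{1}{\dmax+1}\leq c_{\beta}$; chaining this with the constant lower bound from the previous display yields the claim.

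There is no deep obstacle here: the argument is a routine ``fixed constant beats a vanishing reciprocal'' estimate. The two points requiring care are the exact cancellation that turns the numerator into $\sum_{i=3}^{\dmax}i^{-\beta}$, and the use of $\beta>2$ to keep the denominator below the finite constant $\zeta(\beta-1)$ (for $\beta=2$ the denominator would diverge and this scheme fails). I would also flag that $\dmax_{0}$ genuinely depends on $\beta$; this is harmless because $\beta$ is fixed and the ambient analysis sends $\dmax=e^{\alpha/\beta}\to\infty$ as $\alpha\to\infty$, so $\dmax\geq\dmax_{0}$ holds eventually. If one prefers a fully explicit threshold, $\zeta(\beta-1)$ may be replaced by the elementary integral-comparison bound $\frac{\beta-1}{\beta-2}$.
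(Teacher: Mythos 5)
Your proof is correct, but it takes a genuinely different route from the paper's. The paper clears denominators, rewriting the claim as $\sum_{i=1}^{\dmax}\frac{\dmax+1-i}{(\dmax+1)i^{\beta}}\geq 1+\frac{1}{2^{\beta}}$, then discards all terms of the sum except $i\in\{1,2,4,8\}$ and reduces the question to the polynomial inequality $(\dmax-3)2^{\beta}+\dmax-7\geq 8^{\beta}+2\cdot 4^{\beta}$, which yields the explicit threshold $\dmax_{0}=\ceil{\frac{8^{\beta}+2\cdot4^{\beta}+6\cdot2^{\beta}+7}{1+2^{\beta}}}$. You instead note that the numerator is exactly $\sum_{i=3}^{\dmax}i^{-\beta}\geq 3^{-\beta}$ and bound the denominator by $\zeta(\beta-1)$, so the left-hand side exceeds the constant $c_{\beta}=\frac{1}{3^{\beta}\zeta(\beta-1)}$ while the right-hand side vanishes; this is a clean ``constant beats $\frac{1}{\dmax+1}$'' argument and is certainly shorter. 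What you give up is generality and uniformity near $\beta=2$: your bound leans on the convergence of $\sum_{i}i^{-(\beta-1)}$, i.e.\ on the standing assumption $\beta>2$ (which is legitimately in force throughout the first analysis, so this is not a gap), whereas the paper's four-term truncation never invokes convergence of any series and works verbatim for every $\beta>0$; correspondingly your threshold $3^{\beta}\zeta(\beta-1)-1$ blows up as $\beta\to 2^{+}$, while the paper's stays bounded (tending to $\frac{127}{5}$). For the lemma as used — $\beta$ fixed and $\dmax=e^{\alpha/\beta}\to\infty$ — both thresholds are equally serviceable, and you correctly flag the $\beta$-dependence as harmless.
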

\begin{proof}
 The above inequality is equivalent to
 \begin{align}
  &\phantom{\Longleftrightarrow}& \sum_{i=1}^{\dmax}\frac{1}{i^{\beta}}-1-\frac{1}{2^{\beta}} &\geq \sum_{i=1}^{\dmax}\frac{1}{\dmax +1}\cdot\frac{1}{i^{\beta-1}}\nonumber\\
  &\Longleftrightarrow& \sum_{i=1}^{\dmax}\left(\frac{1}{i^{\beta}}-\frac{1}{\dmax+1}\cdot\frac{1}{i^{\beta-1}}\right) &\geq 1+ \frac{1}{2^{\beta}}\nonumber\\
  &\Longleftrightarrow& \sum_{i=1}^{\dmax}\frac{\dmax+1-i}{(\dmax+1)i^{\beta}} &\geq 1+ \frac{1}{2^{\beta}}\label{eqn:ExVgoodSum3}
 \end{align}
Suppose $\dmax\geq 8$, then the sum on the left-hand side of the Inequality~\ref{eqn:ExVgoodSum3} is bounded by the sum of the terms with indices $i=1,2,4,8$:
 \begin{multline}
  \sum_{i=1}^{\dmax}\frac{\dmax+1-i}{(\dmax+1)i^{\beta}} \geq \frac{\dmax}{\dmax+1} + \frac{\dmax-1}{(\dmax+1)2^{\beta}} + \frac{\dmax-3}{(\dmax+1)4^{\beta}} + \frac{\dmax-7}{(\dmax+1)8^{\beta}}\\
   = \frac{\dmax 8^{\beta} + (\dmax-1)4^{\beta} + (\dmax-3)2^{\beta} + \dmax-7}{(\dmax+1)8^{\beta}}.\label{eqn:temp}
 \end{multline}
Using Inequality~\ref{eqn:temp} and the fact that $1+\frac{1}{2^{\beta}} = \frac{(\dmax+1)8^{\beta} + (\dmax+1)4^{\beta}}{(\dmax+1)8^{\beta}}$, in order to prove Inequality~\ref{eqn:ExVgoodSum3} it is sufficient to show the following:
 \begin{align*}
  &\phantom{\Longleftrightarrow}& \frac{\dmax 8^{\beta} + (\dmax-1)4^{\beta} + (\dmax-3)2^{\beta} + \dmax-7}{(\dmax+1)8^{\beta}} &\stackrel{!}{\geq} \frac{(\dmax+1)8^{\beta} + (\dmax+1)4^{\beta}}{(\dmax+1)8^{\beta}}\\
  &\Longleftrightarrow& (\dmax-3)2^{\beta} + \dmax-7 &\stackrel{!}{\geq} 8^{\beta} + 2\cdot 4^{\beta}.
 \end{align*}
 This is valid for $\dmax\geq\frac{8^{\beta}+2\cdot4^{\beta}+6\cdot2^{\beta}+7}{1+2^{\beta}}$. Hence we choose $\dmax_{0}=\ceil*{\frac{8^{\beta}+2\cdot4^{\beta}+6\cdot2^{\beta}+7}{1+2^{\beta}}}$.
\end{proof}
This completes the proof of \autoref{thm:ExVgood}. The next lemma provides an upper bound for $x(V)$:
\begin{lemma}\label{lem:UB_V}
 $x(V) \leq \frac{1}{2}\zeta(\beta)\e^{\alpha}$
\end{lemma}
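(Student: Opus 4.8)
The plan is to bound $x(V)$ not through a structural analysis of the optimal half-integral solution itself, but by comparing it against a single, explicitly chosen feasible solution. Since $x$ is an \emph{optimal} solution of the \LP-relaxation, $x(V)$ is at most the value of any feasible assignment. I would therefore exhibit the uniform half-integral assignment $x'\colon V \to \{0,\tfrac12,1\}$ defined by $x'(v)=\tfrac12$ for every $v\in V$ and argue that it is feasible with value $\tfrac12 n$.

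Feasibility is immediate: for each edge $\{v_i,v_j\}\in E$ we have $x'_i+x'_j=\tfrac12+\tfrac12=1\geq 1$, so every covering constraint is satisfied, and $x'_i=\tfrac12\geq 0$. The same computation handles a self-loop at $v$, whose constraint reads $2x'(v)\geq 1$, and multi-edges, which only repeat an already-satisfied inequality. Hence, by optimality of $x$,
\[ x(V)=\sum_{v\in V} x(v)\ \leq\ \sum_{v\in V} x'(v)\ =\ \tfrac12\,n. \]

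It then remains to bound the number of vertices $n$. Working with the real-valued degree counts $y_i=\e^{\alpha}/i^{\beta}$ as fixed in \autoref{subsec:alpha_beta}, we have
\[ n=\sum_{i=1}^{\dmax}\frac{\e^{\alpha}}{i^{\beta}}=\e^{\alpha}\sum_{i=1}^{\dmax}\frac{1}{i^{\beta}}\ \leq\ \e^{\alpha}\sum_{i=1}^{\infty}\frac{1}{i^{\beta}}=\zeta(\beta)\,\e^{\alpha}, \]
since for $\beta>2$ the partial sum of the convergent $\beta$-series is dominated by $\zeta(\beta)$. Combining the two displays yields $x(V)\leq\tfrac12\zeta(\beta)\e^{\alpha}$, as claimed.

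There is essentially no hard step here; the only points requiring care are (i) confirming that the self-loops and multi-edges permitted in $\Gab$ do not break feasibility of the uniform solution, and (ii) noting that truncating the series at $\dmax=\e^{\alpha/\beta}$ can only decrease the sum, so replacing the partial sum by $\zeta(\beta)$ is a genuine upper bound rather than a mere approximation. Both are settled by the inequalities above.
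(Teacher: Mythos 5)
Your proposal is correct and matches the paper's own argument: the paper likewise bounds $x(V)$ by the value $\tfrac12 n$ of the uniform all-$\tfrac12$ feasible solution and then uses $n=\sum_{i=1}^{\dmax}\e^{\alpha}/i^{\beta}\leq\zeta(\beta)\e^{\alpha}$. Your version merely spells out the details the paper leaves implicit (feasibility under self-loops and multi-edges, and that truncating the $\zeta$-series gives a genuine inequality), which is fine but not a different route.
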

\begin{proof}
 In order to get an upper bound for $x(V)$ we construct a feasible half-integral solution for \gab\ by setting $x(v)=\frac{1}{2}$ for all $v\in V$ where $\frac{1}{2}\sum_{v\in V}\leq \frac{1}{2}\zeta(\beta)\e^{\alpha}$.
\end{proof}

Now let us restate the main \autoref{thm:main} and finish the proof.

\begin{theorem*}
 For $\beta > 2$ the \mvc\ problem in $(\alpha,\beta)$-Power Law Graphs \gab\ can be approximated with expected approximation ratio $\rho \leq \mainratio$.
\end{theorem*}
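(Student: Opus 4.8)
The plan is to assemble the three quantitative ingredients already established and perform one algebraic collapse. First I would invoke \autoref{lem:yThreeHalf}, which guarantees that the rounding $x\mapsto y$ produced by \Autoref{alg:rounding} satisfies both $y(\Vgood)\leq\frac{3}{2}x(\Vgood)$ and $y(V\setminus\Vgood)\leq 2\,x(V\setminus\Vgood)$, so the hypotheses of \autoref{lem:ratio_lemma} are met. This yields, for every instance $\gab\in\Gab$, the per-instance bound $\frac{y(V)}{\OPT}\leq\frac{3}{2}\cdot\frac{x(\Vgood)}{x(V)}+2\cdot\frac{x(V\setminus\Vgood)}{x(V)}$. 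The key simplification is to substitute $x(V\setminus\Vgood)=x(V)-x(\Vgood)$, which collapses the right-hand side to $2-\frac{1}{2}\cdot\frac{x(\Vgood)}{x(V)}$. Thus everything reduces to lower-bounding the single ratio $x(\Vgood)/x(V)$ in expectation.

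Next I would decouple numerator and denominator. Since $\beta>2$, \autoref{lem:UB_V} supplies the deterministic upper bound $x(V)\leq\frac{1}{2}\zeta(\beta)\e^{\alpha}$ valid on every instance (it holds because assigning $\tfrac12$ to all vertices is feasible while $x$ is optimal). Because this bound is a fixed constant rather than a random quantity, I can replace the denominator pointwise, $\frac{x(\Vgood)}{x(V)}\geq\frac{x(\Vgood)}{\frac{1}{2}\zeta(\beta)\e^{\alpha}}$, and then pass to expectations without any Jensen-type correction, obtaining $\E\!\left[\frac{x(\Vgood)}{x(V)}\right]\geq\frac{\E[x(\Vgood)]}{\frac{1}{2}\zeta(\beta)\e^{\alpha}}$. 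Substituting the lower bound $\E[x(\Vgood)]\geq\frac{\e^{\alpha}}{2^{\beta}}\cdot\frac{\zeta(\beta)-1-\frac{1}{2^{\beta}}}{\zeta(\beta-1)}$ from \autoref{thm:ExVgood} then produces a closed form in $\beta$ alone, as the factors $\e^{\alpha}$ cancel.

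Finally I would feed this back into the per-instance inequality: taking expectations of $\frac{y(V)}{\OPT}\leq 2-\frac{1}{2}\cdot\frac{x(\Vgood)}{x(V)}$ and using linearity bounds the expected ratio by $2-\frac{1}{2}\E\!\left[\frac{x(\Vgood)}{x(V)}\right]$, and the routine simplification of $\frac{1}{2}\cdot\frac{\frac{\e^{\alpha}}{2^{\beta}}\cdot(\zeta(\beta)-1-\frac{1}{2^{\beta}})/\zeta(\beta-1)}{\frac{1}{2}\zeta(\beta)\e^{\alpha}}$ yields exactly $\frac{\zeta(\beta)-1-\frac{1}{2^{\beta}}}{2^{\beta}\zeta(\beta-1)\zeta(\beta)}$, so that $\rho\leq\mainratio$. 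The one point that needs care — the expectation of a quotient of two correlated random variables $x(\Vgood)$ and $x(V)$ — is precisely what the \emph{deterministic} denominator bound of \autoref{lem:UB_V} circumvents; without a uniform bound on $x(V)$ one would have to control the joint distribution of numerator and denominator, so this is the step I would guard most carefully.
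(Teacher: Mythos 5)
Your proposal is correct and follows essentially the same route as the paper's own proof: apply \autoref{lem:yThreeHalf} and \autoref{lem:ratio_lemma} to get $\rho\leq\E\left[2-\frac{1}{2}\cdot\frac{x(\Vgood)}{x(V)}\right]$, then combine the deterministic bound of \autoref{lem:UB_V} with the expectation bound of \autoref{thm:ExVgood} and simplify. Your explicit remark that the \emph{pointwise} (deterministic) upper bound on $x(V)$ is what legitimizes bounding the expectation of the quotient by the quotient of the bounds is a point the paper passes over silently, but it is exactly the step the paper performs.
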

\begin{proof}
 \Autoref{alg:rounding} achieves an approximation ratio of $\frac{3}{2}$ for \MVC\ in the subgraph induced by \Vgood\ in \gab\ and a ratio of $2$ in $\gab[V\setminus\Vgood]$, i.e.
 \[\rho\leq\E\left[\frac{3}{2}\cdot\frac{x(\Vgood)}{x(V)} + 2\cdot\frac{x(V)-x(\Vgood)}{x(V)} \right] = \E\left[2-\frac{1}{2}\cdot\frac{x(\Vgood)}{x(V)}\right].\] From \autoref{thm:ExVgood} and \autoref{lem:UB_V} we have that $\E[x(\Vgood)] \geq \frac{1}{2} \cdot \frac{\left( \zeta(\beta) - 1 - \frac{1}{2^{\beta}}\right) \e^{\alpha}}{2^{\beta -1}\zeta(\beta-1)}$ and $x(V) \leq \frac{1}{2}\cdot\zeta(\beta)\e^{\alpha}$.
 This yields
 \[\E\left[\frac{x(\Vgood)}{x(V)}\right]\geq\frac{\frac{1}{2} \cdot \frac{\left( \zeta(\beta) - 1 - \frac{1}{2^{\beta}}\right) \e^{\alpha}}{2^{\beta -1}\zeta(\beta-1)}}{\frac{1}{2}\cdot\zeta(\beta)\e^{\alpha}}
 = \frac{\zeta(\beta)-1-\frac{1}{2^{\beta}}}{2^{\beta-1}\zeta(\beta-1)\zeta(\beta)}\]
and
 \[\rho\leq 2-\frac{1}{2}\cdot\frac{\zeta(\beta)-1-\frac{1}{2^{\beta}}}{2^{\beta-1}\zeta(\beta-1)\zeta(\beta)}
 = 2-\frac{\zeta(\beta)-1-\frac{1}{2^{\beta}}}{2^{\beta}\zeta(\beta-1)\zeta(\beta)}\]
\end{proof}

\subsubsection{Refined Analysis for $\beta > 2.424$}\label{subsec:SecondAnalysis}

We will now refine the analysis of \Autoref{alg:rounding} by giving a better estimate on the probability $\eta(u, U)$ of a high-degree node $u$ being adjacent to a vertex in the set $U$, i.e. a vertex of degree one or two. However, this analysis will only apply to the more restricted range of $\beta > 2.424$. Again, we will first obtain a bound on the expected approximation ratio of the algorithm in terms of the partial sums $\sum_{i=1}^{\Delta}\frac{1}{i^\beta}$ and $ \sum_{i=1}^{\Delta}\frac{1}{i^\beta-1}$ and then show that these can be replaced by $\zeta (\beta)$ and $\zeta (\beta-1)$, respectively.

\begin{lemma}
 For every $u$ with $\deg{u}\geq 3$ and $U\subseteq V$,
 \[\eta(u,U)\geq \etau.\]
\end{lemma}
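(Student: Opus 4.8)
The quantity $\eta(u,U)$ is the probability, over the random matching of $\Mab$, that at least one of the $\deg{u}$ copies of $u$ is matched to a copy of some vertex in $U$. Writing $L:=\sum_{i=1}^{\dmax}\frac{\e^{\alpha}}{i^{\beta-1}}$ for the total number of half-edges (copies) and $\deg{U}:=\sum_{v\in U}\deg{v}$ for the number of copies belonging to $U$, the plan is to bound $\eta(u,U)$ from below via the complementary identity $\eta(u,U)=1-\Pr(\text{no copy of } u \text{ is matched into } U)$. Since $\deg{u}\geq 3$, I would track only three fixed copies $c_1,c_2,c_3$ of $u$: discarding the remaining copies can only lower the chance of hitting $U$, so any lower bound obtained from three copies is valid. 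This already explains the exponent $3$ appearing in the statement.

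For the bracketed factor I would reveal the partners of $c_1,c_2,c_3$ one at a time and bound the conditional probability that each avoids $U$. Conditioned on the earlier copies having missed $U$, the pool of still-unmatched half-edges shrinks (by two for every copy already matched externally) but still contains all $\deg{U}$ copies of $U$; hence the conditional miss-probability has the form $\frac{(\text{pool})-\deg{U}}{\text{pool}}$ with pool size in $\{L-1,L-3,L-5\}$. Because $x\mapsto 1-\deg{U}/x$ is increasing and its logarithm is concave, the product of the three factors is at most $\bigl(1-\deg{U}/(L-2)\bigr)^{3}=\bigl(\frac{L-\deg{U}-2}{L-2}\bigr)^{3}$, which is exactly the cube in the stated bound. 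This yields $\Pr(\text{all three copies miss } U)\leq\bigl(\frac{L-\deg{U}-2}{L-2}\bigr)^{3}$, i.e. the factor $1-(\cdots)^{3}$.

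The prefactor $\frac{L-\dmax+1}{L}$ I would obtain by separating off the degenerate possibility that the tracked copies of $u$ are matched among themselves (self-loops of $u$), which destroys the ``three independent external tries'' picture; since such a pairing only makes ``all copies miss $U$'' more likely, it must be compensated by shrinking the lower bound. Concretely, I would lower-bound the probability that the first tracked copy is matched to a genuine distinct vertex rather than to another copy of $u$; as there are at most $\deg{u}-1\leq\dmax-1$ competing copies of $u$ among the $L$ half-edges, this probability is at least $\frac{L-(\dmax-1)}{L}=\frac{L-\dmax+1}{L}$ (using the real-number approximation of $\Mab$, under which the $O(1)$ discrepancy between $L$ and $L-1$ in the denominator is a lower-order term for $\beta>2$). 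Multiplying the two factors gives the claimed lower bound.

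The main obstacle is precisely the lack of independence in the matching model: the partners of the three copies are negatively correlated and, worse, the copies of $u$ may be matched to one another, so the naive ``three independent Bernoulli trials'' heuristic is not literally correct. Making the sequential conditioning rigorous — controlling how the pool and the count of available $U$-copies evolve, and isolating the self-loop contribution into the clean prefactor — is the delicate step; the monotonicity and concavity of $x\mapsto 1-\deg{U}/x$ is what lets the three shrinking factors be collapsed into a single cube. Once the lemma is established, the remaining routine work foreshadowed in the text is to sum $\eta(u,U)$ over all $u$ with $\deg{u}\geq 3$ and to replace the partial sums $\sum_{i=1}^{\dmax} i^{-\beta}$ and $\sum_{i=1}^{\dmax} i^{-(\beta-1)}$ by $\zeta(\beta)$ and $\zeta(\beta-1)$, which is where the restriction $\beta>2.424$ enters.
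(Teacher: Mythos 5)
There is a genuine gap at the heart of your argument. Your key intermediate claim, $\Pr(c_1,c_2,c_3\text{ all miss }U)\leq\bigl(\frac{L-\deg{U}-2}{L-2}\bigr)^{3}$, is false, and precisely because of the pairings among copies of $u$ that you then try to absorb into the prefactor: a pairing such as $c_2\!\to\!c_3$, or a tracked copy matched to an untracked copy of $u$, counts as a ``miss'' with conditional probability $1$ rather than $\frac{\mathrm{pool}-\deg{U}}{\mathrm{pool}}$. Concretely, take $\deg{u}=4$ and $U=V\setminus\{u\}$, so $\deg{U}=L-4$: the three tracked copies all miss exactly when the four copies of $u$ are matched among themselves, which has probability $\frac{3}{(L-1)(L-3)}\sim 3/L^{2}$, whereas your cube is $\bigl(\frac{2}{L-2}\bigr)^{3}\sim 8/L^{3}$ --- wrong by a factor of order $L$. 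Nor does your prefactor repair this: it conditions only the \emph{first} tracked copy to avoid copies of $u$, so pairings like $c_2\!\to\!c_3$ survive, and already for $\deg{u}=3$ an expansion in $1/L$ gives $\Pr(\text{all miss}\mid c_1\text{ external})-\bigl(\frac{L-\deg{U}-2}{L-2}\bigr)^{3}=\frac{p(1-p)(5p-4)}{L}+O(1/L^{2})$ with $p=\deg{U}/L$; this is positive for $p>\frac{4}{5}$, a regime that actually occurs in this paper, since for the relevant $U$ one has $p=(1+2^{1-\beta})/\zeta(\beta-1)\approx 0.88$ at $\beta=3.5$. If you instead condition \emph{all three} tracked copies to be external --- which does make the cube bound valid --- the union bound over three copies costs a prefactor of roughly $1-\frac{3(\deg{u}-1)}{L}$, which for $\deg{u}$ near $\dmax$ is smaller than the claimed $\frac{L-\dmax+1}{L}$. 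So the final ``multiply the two factors'' step does not deliver the lemma as stated; it delivers a strictly weaker inequality (which, incidentally, would still suffice for the paper's asymptotic theorem, since $\dmax/L\to 0$, but that is not the statement to be proved).

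The gap comes from reading the shape of the bound the wrong way: in the paper, the exponent $3$ and the $\dmax$ in the prefactor are not ``three tracked copies'' and a ``self-loop correction'' --- they are two worst-case evaluations of the \emph{same} variable $\deg{u}$. The paper never passes to the complement; it expands $\eta(u,U)=\sum_{j=1}^{\deg{u}}\Pr(\text{$j$-th copy of $u$ is the first matched into }U)$, lower-bounds each summand so that the sum becomes geometric, sums it in closed form to obtain $\frac{N-\deg{u}+1}{N}\bigl[1-\bigl(\frac{N-\deg{U}-\deg{u}+1}{N-\deg{u}+1}\bigr)^{\deg{u}}\bigr]$ with $N=\sum_{i=1}^{\dmax}\frac{\e^{\alpha}}{i^{\beta-1}}$ (your $L$), and only then invokes monotonicity in $\deg{u}$: the prefactor is minimized at $\deg{u}=\dmax$, the bracket at $\deg{u}=3$. (To be fair, the paper's opening identity also idealizes away pairings among copies of $u$; but its route never relies on the product bound for the complementary event, which is exactly where your version breaks.)
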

\begin{proof}
For a given set $U$ of vertices from $\gab$ we let $\deg{U}=\sum_{v\in U}\deg{v}$. Furthermore let $\eta (u,U)$ be the probability that $u$ is connected to at least one node in $U$. We obtain
 \begin{align*}
  \eta(u,U) &= \Pr(u \text{ matches to } U)\\
  &= \sum_{j=1}^{\deg{u}} \Pr(j\text{-th copy is first one matching to } U)\\
  &= \sum_{j=1}^{\deg{u}} \frac{\deg{u}}{\sum_{i = 1}\frac{e^{\alpha}}{i^{\beta-1}}-(j-1)-1}\prod_{k=1}^{j-1}\left( 1- \frac{\deg{U}}{\sum_{i = 1}\frac{e^{\alpha}}{i^{\beta-1}}-1-(k-1)} \right)
 \end{align*}
Now define $N=\sum_{i = 1}\frac{e^{\alpha}}{i^{\beta-1}}$. We have:
 \begin{align*}
  \hspace*{-5ex}\eta(u,U) &= \sum_{j=1}^{\deg{u}} \frac{\deg{U}}{N-j}\prod_{k=1}^{j-1}\frac{N-\deg{U}-k}{N-k}
  \hspace*{10ex}\geq  \sum_{j=1}^{\deg{u}} \frac{\deg{U}}{N-j}\left(\frac{N-\deg{U}-j+1}{N-j+1}\right)^{j-1}\\
  &\geq  \sum_{j=1}^{\deg{u}} \frac{\deg{U}}{N-j}\left(\frac{N-\deg{U}-\deg{u}+1}{N-\deg{u}+1}\right)^{j-1}
  \geq  \sum_{j=1}^{\deg{u}} \frac{\deg{U}}{N}\left(\frac{N-\deg{U}-\deg{u}+1}{N-\deg{u}+1}\right)^{j-1}\\
  &= \frac{\deg{U}}{N}\left[\frac{1-\left(\frac{N-\deg{U}-\deg{u}+1}{N-\deg{u}+1} \right)^{\deg{u}}}{1-\frac{N-\deg{U}-\deg{u}+1}{N-\deg{u}+1}}\right]\\
  &= \frac{\deg{U}}{N}\left[1-\left(\frac{N-\deg{U}-\deg{u}+1}{N-\deg{u}+1} \right)^{\deg{u}}\right] \cdot \frac{N-\deg{u}+1}{\deg{U}}\\
  &= \frac{N-\deg{u}+1}{N}\left[1-\left(\frac{N-\deg{U}-\deg{u}+1}{N-\deg{u}+1}\right)^{\deg{u}}\right]
 \end{align*}
Since the function $\left(\frac{N-\deg{U}-\deg{u}+1}{N-\deg{u}+1}\right)^{\deg{u}}$ is monotone decreasing in $\deg{u}$ it follows that:
 \begin{align*}
  \eta(u,U) &\geq \frac{N-\dmax+1}{N}\left[1-\left(\frac{N-\deg{U}-3+1}{N-3+1}\right)^{3}\right]\\
  &= \frac{\sum_{i = 1}\frac{e^{\alpha}}{i^{\beta-1}}-e^{\frac{\alpha}{\beta}}+1}{\sum_{i = 1}\frac{e^{\alpha}}{i^{\beta-1}}}\left[1-\left(\frac{\sum_{i = 1}\frac{e^{\alpha}}{i^{\beta-1}}-\deg{U}-3+1}{\sum_{i = 1}\frac{e^{\alpha}}{i^{\beta-1}}-3+1}\right)^{3}\right]
 \end{align*}
\end{proof}

Because of Equation~\ref{eqn:ExVgoodEta} we have ${\displaystyle\E[x(\Vgood)] \geq \frac{1}{2}\;\cdot\;\sum_{\mathclap{u:\deg{u}\geq 3}}\eta(u,U)}$ and we obtain the following approximation ratio:
\begin{align}
 \hspace*{-1ex}\rho &\leq \E\left[2-\frac{1}{2}\cdot\frac{x(\Vgood)}{x(V)}\right]\nonumber\\
 &\leq 2-\frac{1}{2}\cdot \frac{\left(\sum_{i=1}^{\dmax}\frac{\e^{\alpha}}{i^{\beta}}-\e^{\alpha}-\frac{\e^{\alpha}}{2^{\beta}}\right)\cdot\etau}{\frac{1}{2}\sum_{i=1}^{\dmax}\frac{\e^{\alpha}}{i^{\beta}}}\nonumber\\
 &= 2- \underbrace{\frac{\left(\sum_{i=1}^{\dmax}\frac{1}{i^{\beta}}-1-\frac{1}{2^{\beta}}\right)	 \cdot	 \left(\sum_{i = 1}^{\dmax} \frac{1}{i^{\beta-1}}-\frac{\dmax}{\e^{\alpha}} + \frac{1}{\e^{\alpha}}\right)}{\left(\sum_{i = 1}^{\dmax}\frac{1}{i^{\beta-1}}\right) 	\cdot 	\left(\sum_{i=1}^{\dmax}\frac{1}{i^{\beta}}\right)}}_{F}
 \left[1-{\underbrace{\left(\frac{\sum_{i = 1}^{\dmax}\frac{1}{i^{\beta-1}}-\frac{\deg{v}}{\e^{\alpha}}-\frac{2}{\e^{\alpha}}}{\sum_{i = 1}^{\dmax}\frac{1}{i^{\beta-1}}-\frac{2}{\e^{\alpha}}}\right)}_{C}}^{3}\right]\label{eqn:Erho}
\end{align}

Now we show that, in Inequality~\ref{eqn:Erho}, we can replace the partial sums $\sum_{i = 1}^{\dmax}\frac{1}{i^{\beta}}$ and $\sum_{i = 1}^{\dmax}\frac{1}{i^{\beta-1}}$ by $\zeta(\beta)$ and $\zeta(\beta-1)$ respectively.
First, we consider the term $C$ where $\deg{v}=\e^{\alpha}\left(1+\frac{1}{2^{\beta-1}}\right)$
, i.e. the number of copies of degree-1 and degree-2 vertices:
\[C=\frac{\sum_{i = 1}^{\dmax}\frac{1}{i^{\beta-1}}-\frac{\e^{\alpha}\left(1+\frac{1}{2^{\beta-1}}\right)}{\e^{\alpha}}-\frac{2}{\e^{\alpha}}}{\sum_{i = 1}^{\dmax}\frac{1}{i^{\beta-1}}-\frac{2}{\e^{\alpha}}}
 = \frac{\sum_{i = 1}^{\dmax}\frac{1}{i^{\beta-1}}-\left(1+\frac{1}{2^{\beta-1}}\right)-\frac{2}{\dmax^{\beta}}}{\sum_{i = 1}^{\dmax}\frac{1}{i^{\beta-1}}-\frac{2}{\dmax^{\beta}}}\]

We show that following inequality holds true:
 \begin{align*}
  &\phantom{\Longleftrightarrow}& \frac{\sum_{i = 1}^{\dmax}\frac{1}{i^{\beta-1}}-\left(1+\frac{1}{2^{\beta-1}}\right)-\frac{2}{\dmax^{\beta}}}{\sum_{i = 1}^{\dmax}\frac{1}{i^{\beta-1}}-\frac{2}{\dmax^{\beta}}} &\leq \frac{\frac{1}{(\dmax+1)^{\beta-1}} + \frac{2}{\dmax^{\beta}} - \frac{2}{(\dmax+1)^{\beta}}}{\frac{1}{(\dmax+1)^{\beta-1}} + \frac{2}{\dmax^{\beta}} - \frac{2}{(\dmax+1)^{\beta}}}\\
  &\Longleftrightarrow& \frac{\sum_{i = 1}^{\dmax}\frac{1}{i^{\beta-1}}-\left(1+\frac{1}{2^{\beta-1}}\right)-\frac{2}{\dmax^{\beta}}}{\sum_{i = 1}^{\dmax}\frac{1}{i^{\beta-1}}-\frac{2}{\dmax^{\beta}}} &\leq 1\\
  &\Longleftrightarrow& \sum_{i = 1}^{\dmax}\frac{1}{i^{\beta-1}}-\left(1+\frac{1}{2^{\beta-1}}\right)-\frac{2}{\dmax^{\beta}} &\leq \sum_{i = 1}^{\dmax}\frac{1}{i^{\beta-1}}-\frac{2}{\dmax^{\beta}}\\
  &\Longleftrightarrow& \sum_{i = 1}^{\dmax}\frac{1}{i^{\beta-1}}-\left(1+\frac{1}{2^{\beta-1}}\right) &\leq \sum_{i = 1}^{\dmax}\frac{1}{i^{\beta-1}} \qquad\square
 \end{align*}

We have
\[F=\frac{\left(\sum_{i=1}^{\dmax}\frac{1}{i^{\beta}}-1-\frac{1}{2^{\beta}}\right)	 \cdot	 \left(\sum_{i = 1}^{\dmax} \frac{1}{i^{\beta-1}}-\frac{\dmax}{\e^{\alpha}} + \frac{1}{\e^{\alpha}}\right)}	{\left(\sum_{i = 1}^{\dmax}\frac{1}{i^{\beta-1}}\right) 	\cdot 	\left(\sum_{i=1}^{\dmax}\frac{1}{i^{\beta}}\right)}.\]
We let $S_{\beta}=\sum_{i=1}^{\dmax}\frac{1}{i^{\beta}}$ and $S_{\beta-1}=\sum_{i=1}^{\dmax}\frac{1}{i^{\beta-1}}$ and recall that $\e^{\alpha}=\dmax^{\beta}$. According to \autoref{lem:AB_geq_ab} it remains to show the following inequality:
\begin{multline*}
 \frac{\left({S_{\beta}}-1-\frac{1}{2^{\beta}}\right)	 \cdot	 \left(S_{\beta-1}-\frac{1}{\dmax^{\beta-1}} + \frac{1}{\dmax^{\beta}}\right)}	{\left(S_{\beta-1}\right) 	\cdot 	\left(S_{\beta}\right)}\\
 \geq
 \frac{1} {\frac{1}{(\dmax+1)^{\beta}}\left(S_{\beta-1}\right) 	+ 	\frac{1}{(\dmax+1)^{\beta-1}}\left(S_{\beta} + \frac{1}{(\dmax+1)^{\beta}}\right)} \cdot \left[\frac{1}{(\dmax+1)^{\beta}}\left(S_{\beta-1}-\frac{1}{\dmax^{\beta-1}} + \frac{1}{\dmax^{\beta}}\right)\right.\\
 + \left.\left(S_{\beta}-1-\frac{1}{2^{\beta}}+\frac{1}{(\dmax+1)^{\beta}}\right) \cdot \left(-\frac{1}{(\dmax+1)^{\beta}}+\frac{1}{(\dmax+1)^{\beta-1}}+\frac{1}{(\dmax)^{\beta-1}}-\frac{1}{(\dmax)^{\beta}}\right)\right]
\end{multline*}
which is equivalent to
\begin{multline*}
 \left(S_{\beta}-1-\frac{1}{2^{\beta}}\right) \left(S_{\beta-1}-\frac{\dmax-1}{\dmax^{\beta}}\right) \left[\frac{1}{(\dmax+1)^{\beta}}S_{\beta-1} + \left(S_{\beta}+\frac{1}{(\dmax+1)^{\beta}}\right)\frac{1}{(\dmax+1)^{\beta-1}}\right]\\
 \geq
 \left[\frac{1}{(\dmax+1)^{\beta}}\left(S_{\beta-1}+\frac{\dmax-1}{\dmax^{\beta}}\right)\cdot S_{\beta}\cdot S_{\beta-1}\right.\\
 + \left.\left(S_{\beta}-1-\frac{1}{2^{\beta}}+\frac{1}{(\dmax+1)^{\beta}}\right) \cdot \left(\frac{\dmax}{(\dmax+1)^{\beta}} + \frac{\dmax-1}{\dmax^{\beta}}\right)\cdot S_{\beta} \cdot S_{\beta-1}\right].
\end{multline*}
We rearrange terms and get
\begin{multline}
 S_{\beta-1}^{2}S_{\beta}\frac{1}{(\dmax+1)^{\beta}} + S_{\beta}^{2}S_{\beta-1}\left(\frac{\dmax}{(\dmax+1)^{\beta}}+\frac{\dmax-1}{\dmax^{\beta}}\right)\\
 + S_{\beta}S_{\beta-1}\left[-\frac{\dmax-1}{\dmax^{\beta}}\cdot\frac{1}{(\dmax+1)^{\beta}} + \left(\frac{\dmax}{(\dmax+1)^{\beta}}+\frac{\dmax-1}{\dmax^{\beta}}\right) \left(-1-\frac{1}{2^{\beta}}+\frac{1}{(\dmax+1)^{\beta}}\right)\right] \\
 \leq
 S_{\beta-1}^{2}S_{\beta}\frac{1}{(\dmax+1)^{\beta}} + S_{\beta-1}S_{\beta}^{2}\frac{1}{(\dmax+1)^{\beta-1}}\\
 + S_{\beta}S_{\beta-1}\left[\frac{1}{(\dmax+1)^{\beta}}\cdot\frac{1}{(\dmax+1)^{\beta-1}} - \frac{\dmax-1}{\dmax^{\beta}} - \left(1+\frac{1}{2^{\beta}}\right)\frac{1}{(\dmax+1)^{\beta-1}}\right]\\
 + S_{\beta}^{2}\frac{1}{(\dmax+1)^{\beta-1}} + S_{\beta-1}^{2}\left(-1-\frac{1}{2^{\beta}}\right)\frac{1}{(\dmax+1)^{\beta}}\\
 + S_{\beta-1}\left[-\left(1+\frac{1}{2^{\beta}}\right)\frac{1}{(\dmax+1)^{\beta}}\cdot\frac{1}{(\dmax+1)^{\beta-1}}\left(1+\frac{1}{2^{\beta}}\right)\frac{\dmax-1}{\dmax^{\beta}}\cdot\frac{1}{(\dmax+1)^{\beta}}\right]\\
 + S_{\beta}\left[\left(-1-\frac{1}{2^{\beta}}\right)\frac{1-\dmax}{\dmax^{\beta}}\cdot\frac{1}{(\dmax+1)^{\beta-1}} - \frac{\dmax-1}{\dmax^{\beta}}\cdot\frac{1}{(\dmax+1)^{\beta}}\cdot\frac{1}{(\dmax+1)^{\beta-1}}\right]\\
 + \left(1+\frac{1}{2^{\beta}}\right)\frac{\dmax-1}{\dmax^{\beta}}\cdot\frac{1}{(\dmax+1)^{\beta}}\cdot\frac{1}{(\dmax+1)^{\beta-1}}
\label{eqn:MonsterIneq}
\end{multline}

The following lemma shows that in order to prove Inequality~\ref{eqn:MonsterIneq} it is sufficient to show the respective inequality given by the terms of slowest convergence as $\dmax$ goes to infinity.

\begin{lemma}
 Let $f_{\beta},g_{\beta},F_{\beta},G_{\beta}$ be functions of $\dmax$ depending on the parameter $\beta$ with $|g_{\beta}|,|G_{\beta}|\leq c$ for a constant $c$ depending only on $\beta$. Then $f_{\beta}(\dmax) < F_{\beta}(\dmax)$ for almost all $\dmax$ implies \[f_{\beta}(\dmax)\cdot\frac{1}{\dmax^{\beta-1}} + g_{\beta}(\dmax)\cdot\frac{1}{\dmax^{\beta}} \leq F_{\beta}(\dmax)\cdot\frac{1}{\dmax^{\beta-1}} + G_{\beta}(\dmax)\cdot\frac{1}{\dmax^{\beta}}\] for all but finitely many $\dmax$.
\end{lemma}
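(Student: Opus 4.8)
The plan is to reduce the claimed two-term inequality to a single magnitude comparison and then exploit the different rates of decay of $\frac{1}{\dmax^{\beta-1}}$ versus $\frac{1}{\dmax^{\beta}}$. First I would multiply the target inequality through by $\dmax^{\beta-1}>0$, which turns
\[
f_{\beta}(\dmax)\cdot\frac{1}{\dmax^{\beta-1}} + g_{\beta}(\dmax)\cdot\frac{1}{\dmax^{\beta}} \;\leq\; F_{\beta}(\dmax)\cdot\frac{1}{\dmax^{\beta-1}} + G_{\beta}(\dmax)\cdot\frac{1}{\dmax^{\beta}}
\]
into the equivalent statement
\[
F_{\beta}(\dmax) - f_{\beta}(\dmax) \;\geq\; \frac{g_{\beta}(\dmax) - G_{\beta}(\dmax)}{\dmax}.
\]
So the whole lemma reduces to showing that the gap between the leading coefficients dominates a quantity of order $1/\dmax$.

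Next I would bound the right-hand side using the only hypothesis available on the lower-order coefficients: since $|g_{\beta}|,|G_{\beta}|\leq c$, we have $|g_{\beta}(\dmax) - G_{\beta}(\dmax)|\leq 2c$, so the right-hand side is at most $2c/\dmax$ in absolute value and tends to $0$. The left-hand side $F_{\beta}(\dmax) - f_{\beta}(\dmax)$ is positive for almost all $\dmax$ by hypothesis. The crux is to guarantee that this positive gap does not itself shrink faster than $1/\dmax$. Here I would invoke the context of Inequality~\ref{eqn:MonsterIneq}: the leading coefficients $f_{\beta},F_{\beta}$ are assembled from the partial sums $S_{\beta},S_{\beta-1}$, which converge for $\beta>2$, so $f_{\beta}$ and $F_{\beta}$ converge to fixed limits $L_f,L_F$ as $\dmax\to\infty$. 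The strict inequality $f_{\beta}<F_{\beta}$ (read as a strict separation of these limits) then yields $\delta := \tfrac{1}{2}(L_F-L_f)>0$ with $F_{\beta}(\dmax)-f_{\beta}(\dmax)\geq\delta$ for all sufficiently large $\dmax$.

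Finally I would choose $\dmax$ large enough that $2c/\dmax<\delta$, and conclude
\[
F_{\beta}(\dmax) - f_{\beta}(\dmax) \;\geq\; \delta \;>\; \frac{2c}{\dmax} \;\geq\; \frac{g_{\beta}(\dmax) - G_{\beta}(\dmax)}{\dmax},
\]
which is exactly the reduced inequality, so the original estimate holds for all but finitely many $\dmax$.

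The hard part is the step I flagged in the middle paragraph: the bare hypothesis ``$f_{\beta}<F_{\beta}$ for almost all $\dmax$'' is, on its own, not quite enough, since a gap decaying like $1/\dmax^{2}$ would be swamped by the $1/\dmax$ term on the right (one can write down explicit $f_{\beta},F_{\beta},g_{\beta},G_{\beta}$ of that shape to see it). What actually makes the lemma go through is the implicit fact, which I would make explicit by appealing to the convergence of $S_{\beta}$ and $S_{\beta-1}$, that the leading coefficients tend to \emph{distinct} limits and hence the gap is bounded below by a positive constant. Once that separation is in hand, the remainder is a one-line comparison of orders of magnitude.
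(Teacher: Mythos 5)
Your argument is essentially correct, but there is nothing in the paper to compare it against: the paper states this lemma bare and never proves it, passing immediately to its application. That makes the observation in your final paragraph the substantive content of your write-up, and it is accurate --- the lemma as literally stated is false. Your reduction is the right one: multiplying by $\dmax^{\beta-1}>0$ turns the claim into $F_{\beta}(\dmax)-f_{\beta}(\dmax)\geq\bigl(g_{\beta}(\dmax)-G_{\beta}(\dmax)\bigr)/\dmax$ with $|g_{\beta}-G_{\beta}|\leq 2c$, and the counterexample shape you allude to works concretely: take $f_{\beta}\equiv 0$, $F_{\beta}(\dmax)=\dmax^{-2}$, $g_{\beta}\equiv 1$, $G_{\beta}\equiv 0$; then $f_{\beta}<F_{\beta}$ for all $\dmax$, yet the asserted conclusion fails for every $\dmax\geq 2$. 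So the hypothesis ``$f_{\beta}(\dmax)<F_{\beta}(\dmax)$ for almost all $\dmax$'' cannot carry the conclusion on its own; one needs a uniform separation such as $\liminf_{\dmax\to\infty}\bigl(F_{\beta}(\dmax)-f_{\beta}(\dmax)\bigr)>0$.

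Your repair is also the right one for the lemma's only use. In Inequality~\ref{eqn:MonsterIneq} the coefficients multiplying $\dmax^{-(\beta-1)}$ and $\dmax^{-\beta}$ are built from $S_{\beta}$, $S_{\beta-1}$, fixed constants, and ratios such as $\dmax^{\beta}/(\dmax+1)^{\beta}$; for fixed $\beta>2$ all of these converge as $\dmax\to\infty$, so the leading coefficients tend to limits, and boundedness of the lower-order coefficients supplies the constant $c$. Moreover the paper's terminal check, $S_{\beta}-\frac{1}{2^{\beta}}<1+\frac{1}{2^{\beta}}$ for $\beta>2.424$, reflects the strict inequality $\zeta(\beta)<1+2^{1-\beta}$ between the limiting values in that range, which is exactly the positive gap $\delta>0$ your argument needs; then $\delta>2c/\dmax$ for all large $\dmax$ closes the proof. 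The one caveat: what you have proved is a corrected lemma, not the stated one, so if your text is adopted the hypothesis should be amended to the separation condition (distinct limits, or a positive $\liminf$ of the gap), since as stated the implication is not provable.
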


Hence it remains to show that
\begin{multline*}
 S_{\beta}^{2}S_{\beta-1}\left(\frac{\dmax}{(\dmax+1)^{\beta}}+\frac{\dmax-1}{\dmax^{\beta}}\right) - S_{\beta}S_{\beta-1}\left(1+\frac{1}{2^{\beta}}\right)\left(\frac{\dmax}{(\dmax+1)^{\beta}} + \frac{\dmax-1}{\dmax^{\beta}}\right)\\
 \leq
 S_{\beta}S_{\beta-1}(-1)\cdot\frac{\dmax-1}{\dmax^{\beta}} + S_{\beta-1}S_{\beta}^{2}\frac{1}{(\dmax+1)^{\beta-1}}
\end{multline*}
which holds true if and only if
\begin{multline*}
 S_{\beta}^{2}S_{\beta-1}\left(\frac{\dmax}{(\dmax+1)^{\beta}}+\frac{\dmax-1}{\dmax^{\beta}}\right)\\
 \leq
 S_{\beta}S_{\beta-1}\left[\left(1+\frac{1}{2^{\beta}}\right)\left(\frac{\dmax}{(\dmax+1)^{\beta}} + \frac{\dmax-1}{\dmax^{\beta}}\right)-\frac{\dmax-1}{\dmax^{\beta}}\right] + S_{\beta-1}S_{\beta}^{2}\frac{1}{(\dmax+1)^{\beta-1}}
\end{multline*}
which can be rewritten as
\begin{align}
 &\phantom{\Longleftrightarrow}& S_{\beta}^{2}S_{\beta-1}\left(\frac{1}{(\dmax+1)^{\beta}}+\frac{\dmax-1}{\dmax^{\beta}}\right) &\leq S_{\beta}S_{\beta-1}\left[\left(1+\frac{1}{2^{\beta}}\right)\frac{\dmax}{(\dmax+1)^{\beta}} + \frac{1}{2^{\beta}}\cdot\frac{\dmax-1}{\dmax^{\beta}}\right]\nonumber\\ 
 &\Longleftrightarrow& S_{\beta}\left(\frac{\dmax-1}{\dmax^{\beta}}\right) &\leq \left(1+\frac{1}{2^{\beta}}\right)\frac{\dmax}{(\dmax+1)^{\beta}} + \frac{1}{2^{\beta}}\cdot\frac{\dmax-1}{\dmax^{\beta}} \nonumber\\
 &\Longleftrightarrow& S_{\beta}(\dmax-1) &\leq \left(1+\frac{1}{2^{\beta}}\right)\frac{\dmax^{\beta+1}}{(\dmax+1)^{\beta}} + \frac{1}{2^{\beta}}\cdot(\dmax-1)\nonumber\\ 
 &\Longleftrightarrow& \left(S_{\beta}-\frac{1}{2^{\beta}}\right)(\dmax-1) &\leq \left(1+\frac{1}{2^{\beta}}\right)\frac{\dmax^{\beta+1}}{(\dmax+1)^{\beta}} \label{eq:2424}
\end{align}
Now Inequality~\ref{eq:2424} follows from the observation that for all $\beta >2.424$, $S_{\beta}-\frac{1}{2^\beta}< 1+\frac{1}{2^\beta}$.

Finally we have shown the following theorem.

\begin{theorem*}
 For all $\beta >2.424$ the \mvc\ problem on $(\alpha,\beta)$-Power Law Graphs \gab\ can be approximated with expected approximation ratio \[\rho \leq  2- \frac{\left(\zeta(\beta)-1-\frac{1}{2^{\beta}}\right)	 \cdot	 \left(\zeta(\beta-1)-\frac{\dmax}{\e^{\alpha}} + \frac{1}{\e^{\alpha}}\right)}{\zeta(\beta-1) 	\cdot 	\zeta(\beta)}
 \left[1-{\left(\frac{\zeta(\beta-1)-\left(1+\frac{1}{2^{\beta-1}}\right)-\frac{2}{\e^{\alpha}}}{\zeta(\beta-1)-\frac{2}{\e^{\alpha}}}\right)}^{3}\right]\]
 This converges to
 \[\rho \leq  2- \frac{\left(\zeta(\beta)-1-\frac{1}{2^{\beta}}\right)	 \cdot	 \zeta(\beta-1)}{\zeta(\beta-1) 	\cdot 	\zeta(\beta)}
 \left[1-{\left(\frac{\zeta(\beta-1)-\left(1+\frac{1}{2^{\beta-1}}\right)}{\zeta(\beta-1)}\right)}^{3}\right]\]
 as $\alpha\to\infty$.
\end{theorem*}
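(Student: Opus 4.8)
The plan is to assemble the estimates already established in this subsection and then pass to the limit $\alpha\to\infty$. Combining \autoref{lem:ratio_lemma}, the refined lower bound on $\eta(u,U)$, \autoref{thm:ExVgood} (through Equation~\ref{eqn:ExVgoodEta}), and the upper bound $x(V)\le\tfrac12\zeta(\beta)\e^{\alpha}$ of \autoref{lem:UB_V}, I would first record the finite-$\dmax$ bound on the expected ratio, namely Inequality~\ref{eqn:Erho}, in which everything is written through the partial sums $S_{\beta}=\sum_{i=1}^{\dmax}i^{-\beta}$ and $S_{\beta-1}=\sum_{i=1}^{\dmax}i^{-(\beta-1)}$ together with the lower-order corrections $\tfrac{\dmax}{\e^{\alpha}}$, $\tfrac{1}{\e^{\alpha}}$ and $\tfrac{2}{\e^{\alpha}}$. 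The first (non-asymptotic) bound of the statement is then exactly Inequality~\ref{eqn:Erho} once we are allowed to write $\zeta(\beta),\zeta(\beta-1)$ in place of the partial sums, so the whole argument splits into a partial-sum replacement step and a limit computation.

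For the replacement I would treat the two $\dmax$-dependent factors separately. The bracket factor $C$ is immediate: as shown above $C\le 1$, since subtracting the positive quantity $1+\tfrac{1}{2^{\beta-1}}$ only decreases the numerator, so enlarging $\dmax\to\infty$ (i.e.\ replacing $S_{\beta-1}$ by $\zeta(\beta-1)$) is safe. The factor $F$ is the delicate one: here I would invoke \autoref{lem:AB_geq_ab} to reduce the desired monotonicity $F(S_\beta,S_{\beta-1})\ge F(\zeta(\beta),\zeta(\beta-1))$ to an inequality between the tail increments of numerator and denominator, which after clearing denominators is precisely Inequality~\ref{eqn:MonsterIneq}. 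The convergence lemma stated above then lets me discard every contribution decaying like $\dmax^{-\beta}$ against the dominant $\dmax^{-(\beta-1)}$ terms, collapsing Inequality~\ref{eqn:MonsterIneq} to the single clean Inequality~\ref{eq:2424}.

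The crux of the theorem is Inequality~\ref{eq:2424},
\[\left(S_{\beta}-\tfrac{1}{2^{\beta}}\right)(\dmax-1)\le\left(1+\tfrac{1}{2^{\beta}}\right)\frac{\dmax^{\beta+1}}{(\dmax+1)^{\beta}}.\]
Comparing leading coefficients in $\dmax$ (note $\tfrac{\dmax^{\beta+1}}{(\dmax+1)^{\beta}}=\dmax\bigl(1-\tfrac{1}{\dmax+1}\bigr)^{\beta}=\dmax-\Oh(1)$), this holds for all large $\dmax$ exactly when $S_{\beta}-\tfrac{1}{2^{\beta}}<1+\tfrac{1}{2^{\beta}}$, i.e.\ $S_{\beta}<1+2^{1-\beta}$. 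Letting $\dmax\to\infty$ this becomes $\zeta(\beta)<1+2^{1-\beta}$, equivalently $\sum_{i\ge 3}i^{-\beta}<2^{-\beta}$, whose unique crossover a direct numerical check places at $\beta\approx 2.424$ — precisely the hypothesis of the theorem and the source of the restricted range. I expect this to be the main obstacle, both because it forces the $2.424$ threshold and because one must verify that the strict limiting inequality genuinely propagates back to Inequality~\ref{eq:2424} for all sufficiently large $\dmax$ (the regime where the $(\alpha,\beta)$-model is meaningful).

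Finally I would carry out the limit. Recalling $\e^{\alpha}=\dmax^{\beta}$ with $\dmax=\e^{\alpha/\beta}$, we get $\tfrac{\dmax}{\e^{\alpha}}=\dmax^{1-\beta}\to 0$, $\tfrac{1}{\e^{\alpha}},\tfrac{2}{\e^{\alpha}}\to 0$, while $S_{\beta}\to\zeta(\beta)$ and $S_{\beta-1}\to\zeta(\beta-1)$ as $\alpha\to\infty$. Substituting these limits into the finite bound and using continuity of the expression in its arguments yields
\[\rho\le 2-\frac{\bigl(\zeta(\beta)-1-\tfrac{1}{2^{\beta}}\bigr)\zeta(\beta-1)}{\zeta(\beta-1)\zeta(\beta)}\left[1-\left(\frac{\zeta(\beta-1)-\bigl(1+\tfrac{1}{2^{\beta-1}}\bigr)}{\zeta(\beta-1)}\right)^{3}\right],\]
which is the claimed asymptotic ratio $\rho'$; the non-asymptotic bound is Inequality~\ref{eqn:Erho} after the justified partial-sum replacement, completing the proof.
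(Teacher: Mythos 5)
Your proposal is correct and takes essentially the same route as the paper: it assembles Inequality~\ref{eqn:Erho}, justifies replacing the partial sums $S_{\beta},S_{\beta-1}$ by $\zeta(\beta),\zeta(\beta-1)$ via the bound $C\leq 1$ together with Lemma~\ref{lem:AB_geq_ab}, reduces through Inequality~\ref{eqn:MonsterIneq} and the convergence lemma to Inequality~\ref{eq:2424}, invokes the observation $S_{\beta}-\frac{1}{2^{\beta}}<1+\frac{1}{2^{\beta}}$ for $\beta>2.424$, and finally passes to the limit $\alpha\to\infty$ using $\e^{\alpha}=\dmax^{\beta}$. Your explicit identification of the threshold as the crossover of $\zeta(\beta)<1+2^{1-\beta}$, equivalently $\sum_{i\geq 3}i^{-\beta}<2^{-\beta}$, merely makes precise what the paper asserts without elaboration.
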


\section{Conclusion}\label{sec:conclusion}

In \autoref{sec:vc_on_plg} we presented a new approximation algorithm for \MVC\ in $(\alpha,\beta)$-\PLG\ with expected approximation ratio of $\rho\leq\mainratio$ in our first analysis of \autoref{subsec:FirstAnalysis}. Moreover, in our refined analysis 
we showed for $\beta > 2.424$ an expected asymptotic approximation ratio of $\rho'\leq\newratio$.
The algorithm itself basically consists of a deterministic rounding procedure on a half-integral solution for \MVC\ (c.f. \Autoref{alg:rounding}).
We showed that this rounding procedure yields an approximation ratio of $\frac{3}{2}$ in the subgraph induced by the low-degree vertices of the $(\alpha,\beta)$-\PLG\ and a 2-approximation in the residual graph.

%
%
%
%

Further research will be directed towards extending the improved analysis also to the range $\beta < 2.424$ and towards investigating the approximability of \MVC\ in other PLG-Models, e.g. the Preferential Attachment Model in \cite{Barabasi1999}.

\section*{Acknowledgements}
The first author is supported by the NRW State within the B-IT Research School. The authors would like to thank Marek Karpinski for helpful remarks and discussions.



\end{document}